\newtheorem{theorem}{Theorem}[section]
\newtheorem{lemma}[theorem]{Lemma}
\newtheorem{corollary}[theorem]{Corollary}
\newtheorem{proposition}[theorem]{Proposition}
\theoremstyle{definition}
\newtheorem{definition}[theorem]{Definition}
\theoremstyle{remark}
\newtheorem{remark}[theorem]{Remark}
\newcommand{\Graph}{\Gamma}
\newcommand{\G}{\Gamma}
\newcommand{\Hamiltonian}{\mathcal{H}}
\newcommand{\Domain}{\mathcal{D}}
\newcommand{\Id}{\mathbb{I}}
\newcommand{\cA}{\mathcal{A}}
\newcommand{\C}{\mathbb{C}}
\newcommand{\cT}{\mathcal{T}}
\newcommand{\Vertices}{\mathcal{V}}
\newcommand{\Edges}{\mathcal{E}}
\newcommand{\Lengths}{\mathcal{L}}
\newcommand{\CC}{\mathbb{C}}
\newcommand{\R}{\mathbb{R}}
\newcommand{\cH}{\mathcal{H}}
\newcommand{\rmi}{{\mathrm{i}}}
\newcommand\term[1]{{\em #1\/}}
\newcommand\bbar[1]{\overline{#1}}
\newcommand{\HkG}[1]{{\widetilde{H}^{#1}(\G)}}
\DeclareMathOperator{\diag}{diag}
\DeclareMathOperator{\Real}{\mathfrak{Re}}
\title[Spectra of quantum
graphs]{Dependence of the spectrum of a quantum graph on vertex
  conditions and edge lengths
}
\author{Gregory~Berkolaiko}
\author{Peter~Kuchment}
\address{Dept. of Mathematics, Texas A\&M University, College Station,
  TX 77843-3368, USA}
\begin{document}


\begin{abstract}
  We study the dependence of the quantum graph Hamiltonian, its
  resolvent, and its spectrum on the vertex conditions and edge
  lengths. In particular, several results on the analyticity and
  interlacing of the spectra of graphs with different vertex
  conditions are obtained and their applications are discussed.
\end{abstract}

\maketitle

\section{Introduction}
\label{sec:intro}

Graph models have long been used as a simpler setting to study
complicated phenomena.  Quantum graphs in particular have recently
gained popularity as models for thin wires, eigenvalue statistics of
chaotic systems and properties of the nodal domains of
eigenfunctions.  We refer the interested reader to the recent reviews
\cite{Kuc_wrm02,GnuSmi_ap06,Kuc_incol08} and collections of papers
\cite{BerCarFulKuc_eds06,ExnKeaKuc_eds08}.

A quantum graph is a metric graph equipped with a self-adjoint
differential ``Hamiltonian'' operator (usually of Schr\"odinger type)
defined on the edges and matching conditions specified at the
vertices.  Every edge of the graph has a length assigned to it.  In
this manuscript we establish several results concerning the general
properties of the spectrum of the Hamiltonian as a function of the
parameters involved: the edge lengths and matching (vertex)
conditions.  In Section~\ref{sec:setup} we introduce the standard
notions related to quantum graphs, vertex conditions, and quadratic
forms of the corresponding Hamiltonians.  The most widely used example
of the vertex conditions, the $\delta$-type condition, is described in
some detail.  The results presented in Section~\ref{sec:dependence}
concern the analyticity of the spectrum as a function of the
parameters involved: vertex conditions and edge lengths.  Analytic
dependence on the potential (as an infinite-dimensional parameter) can
also be established by the methods used in the Section, but we omit
this to keep the presentation simple.  In Section~\ref{sec:Hadamard}
we compute the derivative of an eigenvalue with respect to a variation
in the length of an edge; this is an analogue of a well-known Hadamard
variational formula.  Derivative with respect to a parameter of a
$\delta$-type condition is also computed.

Section~\ref{sec:eig_interlacing} again focuses on the $\delta$-type
condition at a vertex.  Varying such a condition at a specified vertex
or changing connectivity of the vertex we obtain families of spectra
and study their interlacing properties.  Eigenvalue interlacing (or
bracketing) is a powerful tool in spectral theory with such well-known
applications as the derivation of the asymptotic Weyl law, see
\cite{CourantHilbert_volume1}.  In the graph setting, it allows one to
estimate eigenvalue of a given graph via the eigenvalues of its
subgraphs, which may be easier to calculate.  Interlacing results on
graphs have already been used in several situations
\cite{Sch_wrcm06,Ber_cmp08,LlePos_jmaa08}.  We significantly
generalize these results and put them in the form particularly suited
for the applications.  We discuss several applications.  In
particular, we give a simple derivation of the number of nodal domains of
the $n$-th eigenfunction on a quantum tree and study irreducibility of
the spectrum of a family of graphs obtained by varying $\delta$-type
condition at one of the vertices.

\section{Quantum graph Hamiltonian}
\label{sec:setup}

Let $\Graph = (\Vertices, \Edges)$ be a graph with finite sets of
vertices $\Vertices=\{v_j\}$ and edges $\Edges=\{e_j\}$. We will
assume that $\G$ is a \term{metric graph}, i.e. every edge $e$ is a
$1$-dimensional segment with a positive finite length $L_e$ and a
coordinate $x_e$ assigned.  Each edge corresponds to two
\term{directed edges}, or \term{bonds}, of opposite directions.  If
bonds $b_1$ and $b_2$ correspond to the same edge, they are called
\term{reversals} of each other. In this case we use the notations $b_1
= \bbar{b_2}$ and $b_2 = \bbar{b_1}$.  A bond $b$ inherits its length
from the edge $e$ it corresponds to, in particular, $L_{b} =
L_{\bbar{b}}=L_e$.  A coordinate $x_b$ is assigned on each bond, and
the coordinates on mutually reversed bonds are connected via
$x_{\bbar{b}} = L_b-x_b$.

\begin{definition}
  \label{D:spaces}
  \indent
  \begin{itemize}
  \item The \term{space $L_2(\Gamma )$}
    on $\Gamma $ consists of functions that are measurable and square
    integrable on each edge $e$ with the norm
    \begin{equation*}\label{E:L2_sum}
      \|f\|_{L_2(\Gamma)}^2:=\sum\limits_{e \in \Edges}\|f\|_{L_2(e)}^2.
    \end{equation*}
    In other words, $L_2(\Gamma )$ is the orthogonal direct sum of
    spaces $L_2(e)$.

  \item We denote by $\widetilde{H}^k(\G)$ the space
    \begin{equation*}\label{E:tilde_space}
      \widetilde{H}^k(\G):=\bigoplus_{e\in\Edges}H^k(e),
    \end{equation*}
    which consists of the functions $f$ on $\G$ that on each edge $e$
    belong to the Sobolev space $H^k(e)$ and is equipped with the norm
    \begin{equation*}\label{E:tilde_space_sum}
      \|f\|_{\HkG{k}}^2 := \sum\limits_{e\in\Edges}\|f\|^2_{H^k(e)}.
    \end{equation*}

  \item The \term{Sobolev space $H^1(\Gamma)$}
    consists of all {\bf continuous} functions from $\HkG{1}$.
 \end{itemize}
\end{definition}

Note that in the definition of $\widetilde{H}^k(\G)$ the smoothness is
enforced along edges only, without any junction conditions at the
vertices at all.  The continuity condition imposed on functions from
the Sobolev space $H^1(\G)$ means that any function $f$ from this
space assumes the same value at a vertex $v$ on all edges adjacent to
$v$, and thus $f(v)$ is uniquely defined. This is a natural condition
for one-dimensional $H^1$-functions, which are known to be continuous
in the standard $1$-dimensional setting.

A metric graph becomes quantum after being equipped with an additional
structure: assignment of a self-adjoint differential operator.  This operator will
be also called the \term{Hamiltonian}.  The frequently arising
in the quantum graph studies operator is the negative second
derivative acting on each edge ($x$ is the coordinate $x$ along an edge)
\begin{equation}
  f(x) \mapsto -\frac{d^2f}{dx^2}.  \label{E:deriv_2}
\end{equation}
or the more general Schr\"{o}dinger operator
\begin{equation}
  f(x) \mapsto -\frac{d^2f}{dx^2}+V(x)f(x),  \label{E:electr}
\end{equation}
where $V(x)$ is an \term{electric potential}. (Our results will hold, for instance, for $V\in L_2(\G)$. Generalizations to more general operators, e.g. including magnetic terms, are also straightforward.)

Notice that for both these operators the direction of the edge is
irrelevant. This is not true anymore if one wants to include
derivative term of an odd order, e.g. magnetic potential, but we shall
not address such operators in the present note (see, e.g.,
\cite{FulKucWil_jpa07,Pos_AHP09} concerning these issues).

The natural smoothness requirement coming from the ODE theory is that
$f$ belongs to the Sobolev space $H^2(e)$ on each edge
$e$.  Appropriate boundary value conditions at the vertices
(\term{vertex conditions}) still need to be added, which are considered in the next subsection.

\subsection{Vertex conditions.}
\label{sec:vertex_conditions}

We will briefly describe now the known descriptions of the vertex
conditions that one can add to the differential expression
(\ref{E:electr}) in order to create a self-adjoint operator (see,
e.g., \cite{KosSch_jpa99,Har_jpa00,Kuc_wrm04,FulKucWil_jpa07} for
details).

Assume that the domain of the operator is a subspace of the Sobolev
space $\HkG{2}$ (see the references above for the justification of
this assumption). Then the standard Sobolev trace theorem (e.g.,
\cite{EdmundsEvans_spectral}) implies that both the function $f$,
$f\in H^2(e)$, and its first derivative have correctly defined values
at the endpoints of the edge $e$.  Thus, for a function $f\in \HkG{2}$
and a vertex $v$ we can define the column vectors $F(v)$ and $F'(v)$
\begin{equation}\label{E:F-column}
  F(v) :=
  \left(
    \begin{array}{c}
      f_{e_1}(v) \\
      \dots \\
      \dots \\
      f_{e_{d_v}}(v) \\
    \end{array}
  \right)
  \qquad
  F'(v) :=
  \left(
    \begin{array}{c}
      f_{e_1}'(v) \\
      \dots \\
      \dots \\
      f_{e_{d_v}}'(v) \\
    \end{array}
  \right)
\end{equation}
of the values at the vertex $v$ that functions $f$ and $f'$ attain
along the edges incident to $v$.  Here $d_v$ is the degree of the
vertex $v$.  The derivatives of $f$ at vertices are always taken away
from the vertices and into the edges.

Descriptions of all possible vertex conditions that would make
operator $\Hamiltonian$ self-adjoint can be done in several somewhat
different ways.  Below we list the most usual descriptions, which were
introduced in \cite{KosSch_jpa99,Har_jpa00,Kuc_wrm04}.

\begin{theorem}\label{T:BC} Let $\Gamma$ be a metric graph with
  finitely many edges. Consider the operator ${\mathcal H}$ acting as
  $-\dfrac{d^2}{dx_e^2}+V(x)$ on each edge $e$, with the domain consisting
  of functions that belong to $H^2(e)$ and satisfying some local vertex
  conditions involving vertex values of functions and their
  derivatives. The operator is self-adjoint if and only if the vertex
  conditions can be written in one (and thus any) of the following
  three forms:
  \begin{description}
  \item[A] For every vertex $v$ of degree $d_v$ there exist $d_v \times
    d_v$ matrices $A_v$ and $B_v$ such that
    \begin{equation}
      \label{E:condit_rank}
      \mbox{The } d_v\times 2d_v \mbox{ matrix }
      (A_v\,B_v) \mbox{ has the maximal rank.}
    \end{equation}
    \begin{equation}
      \label{E:condit_sa}
      \mbox{The matrix } A_vB_v^* \mbox{ is self-adjoint.}
    \end{equation}
    and functions $f$ from the domain of $\Hamiltonian$ satisfy the vertex conditions
    \begin{equation}
      \label{E:cond_kossch}
      A_vF(v)+B_vF'(v)=0.
    \end{equation}

  \item[B] For every vertex $v$ of degree $d_v$, there exists a
    unitary $d_v\times d_v$ matrix $U_v$ such that functions $f$ from
    the domain of $\Hamiltonian$ satisfy the vertex conditions
    \begin{equation}\label{E:cond_harmer}
      \rmi(U_v-\Id)F(v)+(U_v+\Id)F^\prime(v)=0,
    \end{equation}
    where $\mathbb{I}$ is the $d_v\times d_v$ identity matrix.

  \item[C] For every vertex $v$ of degree $d_v$, there are three
    orthogonal (and mutually orthogonal) projectors $P_{D,v}$,
    $P_{N,v}$ and $P_{R,v} := \Id-P_{D,v}-P_{N,v}$ (one or two
    projectors can be zero) acting in $\mathbb{C}^{d_v}$ and an
    invertible self-adjoint operator $\Lambda_v$ acting in the
    subspace $P_{R,v}\mathbb{C}^{d_v}$, such that functions $f$ from
    the domain of $\Hamiltonian$ satisfy the vertex conditions
    \begin{equation}
      \begin{cases}
        P_{D,v}F(v)=0 \mbox{ - the ``Dirichlet part''}, \\
        P_{N,v}F'(v)=0 \mbox{ - the ``Neumann part''},\\
        P_{R,v}F'(v) =\Lambda_v P_{R,v}F(v) \mbox{ - the ``Robin part''}.\\
      \end{cases}\label{E:cond_kuch}
    \end{equation}
  \end{description}
\end{theorem}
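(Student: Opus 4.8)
The plan is to prove the equivalence of the three descriptions \textbf{A}, \textbf{B}, \textbf{C} together with the self-adjointness claim by establishing a cycle of implications and then invoking the classical von Neumann / boundary-triple theory for the self-adjointness statement itself. The underlying structure is that a local vertex condition at $v$ is nothing but the choice of a subspace of the ``boundary space'' $\CC^{d_v}\oplus\CC^{d_v}$ on which the boundary form $\langle F'(v),G(v)\rangle-\langle F(v),G'(v)\rangle$ vanishes, and self-adjointness corresponds exactly to this subspace being Lagrangian (maximal isotropic) with respect to that symplectic form. Since everything decouples over vertices, it suffices to work at a single vertex and drop the subscript $v$; write $d=d_v$.

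First I would treat \textbf{A}. Given $(A\,B)$ of maximal rank $d$ with $AB^*$ self-adjoint, I would show that the solution space $\mathcal{M}=\{(F,F')\colon AF+BF'=0\}$ has dimension exactly $d$ (from maximal rank) and is isotropic for the symplectic form: for $(F,F'),(G,G')\in\mathcal{M}$ one solves, on the relevant ranges, and uses $AB^*=BA^*$ to get $\langle F',G\rangle=\langle F,G'\rangle$. Conversely any $d$-dimensional isotropic subspace arises this way. The standard route from \textbf{A} to \textbf{B} is the Cayley-type substitution: the maximal rank and self-adjointness of $AB^*$ are together equivalent to the existence of a unitary $U$ with $A=\tfrac{\rmi}{2}(U-\Id)\cdot C$ and $B=\tfrac12(U+\Id)\cdot C$ for some invertible $C$ (namely one may take $U=-(A+\rmi B)^{-1}(A-\rmi B)$ after checking $A+\rmi B$ is invertible, which follows from the rank and self-adjointness hypotheses); multiplying the condition $AF+BF'=0$ on the left by $C^{-1}$ turns it into \eqref{E:cond_harmer}. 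Running this backwards gives \textbf{B}$\Rightarrow$\textbf{A} with $A=\rmi(U-\Id)$, $B=U+\Id$, for which $(A\,B)$ has maximal rank and $AB^* = \rmi(U-\Id)(U^*+\Id)=\rmi(U^*-U)$ is self-adjoint.

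Next I would pass \textbf{B}$\Leftrightarrow$\textbf{C} by spectrally decomposing the unitary $U$. Write $\CC^{d}$ as the orthogonal sum of the eigenspaces of $U$ for the eigenvalues $+1$, $-1$, and the remaining eigenvalues $e^{\rmi\theta}$ with $\theta\in(0,\pi)\cup(\pi,2\pi)$. On the $(-1)$-eigenspace, $\rmi(U-\Id)=-2\rmi$ is invertible while $U+\Id=0$, so \eqref{E:cond_harmer} forces $F(v)=0$ there: this is the Dirichlet part, $P_D$ the projector onto $\ker(U+\Id)$. On the $(+1)$-eigenspace, $U-\Id=0$ and $U+\Id=2$, giving $F'(v)=0$: the Neumann part, $P_N$ the projector onto $\ker(U-\Id)$. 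On the complementary space $P_R\CC^{d}$, $U+\Id$ is invertible, so \eqref{E:cond_harmer} reads $F'(v)=\Lambda P_R F(v)$ with $\Lambda=-\rmi(U+\Id)^{-1}(U-\Id)=-\rmi(U-\Id)(U+\Id)^{-1}$, the Cayley transform, which is self-adjoint because $U$ is unitary and is invertible on $P_R\CC^d$ because $U-\Id$ is invertible there. Reversing this, given $(P_D,P_N,P_R,\Lambda)$ one reconstructs $U=-P_D+P_N+(\Lambda-\rmi\Id)(\Lambda+\rmi\Id)^{-1}$ on the respective subspaces; unitarity is immediate from self-adjointness of $\Lambda$, and one checks \eqref{E:cond_harmer} reduces to \eqref{E:cond_kuch} on each block. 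Finally, for the self-adjointness assertion I would quote the boundary-triple framework (Green's formula on $\widetilde H^2(\G)$ with boundary maps $f\mapsto (F(v))_v$, $f\mapsto (F'(v))_v$): the operator with vertex conditions \eqref{E:cond_harmer} is self-adjoint precisely because $U$ being unitary makes the graph of the condition a Lagrangian subspace, and conversely any self-adjoint realization with local conditions must have this form; this is the content of \cite{KosSch_jpa99,Har_jpa00,Kuc_wrm04}, and I would cite it rather than reprove the deficiency-index computation.

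The main obstacle is the passage between \textbf{A} and \textbf{B}, i.e. showing that ``$(A\,B)$ has maximal rank and $AB^*$ self-adjoint'' is equivalent to ``$A+\rmi B$ invertible (equivalently $A-\rmi B$ invertible) and $U:=-(A+\rmi B)^{-1}(A-\rmi B)$ unitary.'' The delicate point is that maximal rank alone does not give invertibility of $A+\rmi B$; one genuinely needs to combine it with the self-adjointness of $AB^*$ to rule out a nonzero vector $x$ with $(A+\rmi B)^*x=0$, i.e. $A^*x=-\rmi B^*x$, and then derive from $AB^*=BA^*$ a contradiction with $(A\,B)$ having rank $d$. Once invertibility is in hand, unitarity of $U$ is a short computation: $U^*U=\Id$ unwinds to $(A-\rmi B)(A^*+\rmi B^*)=(A+\rmi B)(A^*-\rmi B^*)$, which is exactly $AB^*=BA^*$. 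Everything else is linear algebra organized by the symplectic/Lagrangian picture, and the decoupling over vertices means no analytic subtlety enters beyond the already-cited trace theorem.
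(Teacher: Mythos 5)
The paper does not actually prove Theorem~\ref{T:BC}: it states it as a known result and refers to \cite{KosSch_jpa99,Har_jpa00,Kuc_wrm04}, so there is no internal proof to compare against. Your sketch reconstructs exactly the standard argument of those references --- decouple over vertices, regard a vertex condition as a subspace of the boundary space $\CC^{d_v}\oplus\CC^{d_v}$, identify self-adjointness with that subspace being Lagrangian for the boundary (Green) form, pass from (A) to (B) by a Cayley transform and from (B) to (C) by the spectral decomposition of $U_v$ into its $-1$, $+1$ and remaining eigenspaces --- and the plan is sound; in particular you correctly isolate the one genuinely delicate point, namely that maximal rank of $(A_v\,B_v)$ alone does not give invertibility of $A_v\pm\rmi B_v$, and that one must combine it with self-adjointness of $A_vB_v^*$ to show a vector in the cokernel is orthogonal to the ranges of both $A_v$ and $B_v$. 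A few convention-level slips should be fixed in a full write-up, though none affects the validity of the approach: with the condition written as $A_vF(v)+B_vF'(v)=0$, the unitary realizing \eqref{E:cond_harmer} is $U_v=-(A_v-\rmi B_v)^{-1}(A_v+\rmi B_v)$, whereas your $-(A_v+\rmi B_v)^{-1}(A_v-\rmi B_v)$ gives $U_v^{*}$ (equivalently the condition $A_vF-B_vF'=0$); the ambiguity factor $C$ must multiply $(A_v\,B_v)$ on the left, not on the right, since only left multiplication by an invertible matrix preserves the condition; and on the Robin block the reconstruction is $U_v=(\rmi\Id-\Lambda_v)(\rmi\Id+\Lambda_v)^{-1}$ rather than $(\Lambda_v-\rmi\Id)(\Lambda_v+\rmi\Id)^{-1}$. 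For the self-adjointness assertion itself (that Lagrangian conditions yield a self-adjoint operator and that every local self-adjoint realization is of this form) you defer to the boundary-triple/deficiency-index computation in the cited sources, which is precisely what the paper does as well.
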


\subsection{Quadratic form}
\label{sec:quadratic_form}

To describe the quadratic form of the operator ${\mathcal H}$, which
is a self-adjoint realization of the Schr\"odinger operator
(\ref{E:electr}) acting along each edge, the self-adjoint vertex
conditions written in the form (C) of Theorem \ref{T:BC} are the most
convenient.  The following theorem is cited from \cite{Kuc_wrm04}.

\begin{theorem}
  \label{T:qform}
  The quadratic form $h$ of ${\mathcal H}$ is given as
  \begin{equation}\label{E:qform}
    h[f,f]= \sum\limits_{e \in \Edges} \int\limits_e
    \left|\frac{df}{dx}\right|^2dx
    + \sum\limits_{e \in \Edges} \int\limits_e
    V(x) \left|f(x)\right|^2dx
    + \sum\limits_{v \in \Vertices} \left\langle
      \Lambda_vP_{R,v}F,P_{R,v}F \right\rangle,
  \end{equation}
  where $\langle , \rangle$ denotes the standard Hermitian inner
  product in $\CC^{\dim P_{R,v}}$. The domain of this form consists of all
  functions $f$ that belong to $H^1(e)$ on each edge $e$ and satisfy
  at each vertex $v$ the condition $P_{D,v}F=0$.

  Correspondingly, the sesqui-linear form of ${\mathcal H}$ is
  \begin{equation}\label{E:sesqform}
    h[f,g]= \sum\limits_{e \in \Edges} \int\limits_e
    \frac{df}{dx}\overline{\frac{dg}{dx}}dx
    + \sum\limits_{e \in \Edges} \int\limits_e
    V(x) f(x) \overline{g(x)} dx
    + \sum\limits_{v\in\Vertices} \left\langle
      \Lambda_vP_{R,v}F,P_{R,v}G \right\rangle.
  \end{equation}
\end{theorem}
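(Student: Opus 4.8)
The plan is to derive \eqref{E:qform}--\eqref{E:sesqform} by integration by parts from the operator $\mathcal H$, and then to recognize the right-hand side of \eqref{E:sesqform} as a closed semibounded quadratic form whose associated self-adjoint operator must coincide with $\mathcal H$. Write $\mathcal D[h]$ for the proposed form domain, i.e.\ the set of $f\in\widetilde{H}^{1}(\G)$ with $P_{D,v}F(v)=0$ at every vertex $v$; the vertex values $F(v)$ are well defined by the one-dimensional Sobolev trace theorem already invoked in Section~\ref{sec:vertex_conditions}. Take $f\in\mathcal D(\mathcal H)$ and $g\in\mathcal D[h]$. On each edge $e$, since $f\in H^2(e)$ and $g\in H^1(e)$, integration by parts gives
\begin{equation*}
  \int_e\bigl(-f''+Vf\bigr)\overline{g}\,dx
  =\int_e f'\,\overline{g'}\,dx+\int_e Vf\,\overline{g}\,dx
   -\bigl[f'\overline{g}\bigr]_{\partial e}.
\end{equation*}
Summing over all edges and grouping the endpoint terms vertex by vertex, and recalling that the entries of $F'(v)$ are the derivatives taken away from the vertex into the edges, the boundary contributions assemble into $\sum_{v}\langle F'(v),G(v)\rangle$. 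By mutual orthogonality of the projectors, the splitting $\C^{d_v}=P_{D,v}\C^{d_v}\oplus P_{N,v}\C^{d_v}\oplus P_{R,v}\C^{d_v}$ decomposes $\langle F'(v),G(v)\rangle$ into $\langle P_{D,v}F'(v),P_{D,v}G(v)\rangle+\langle P_{N,v}F'(v),P_{N,v}G(v)\rangle+\langle P_{R,v}F'(v),P_{R,v}G(v)\rangle$. The first term vanishes since $P_{D,v}G(v)=0$ for $g\in\mathcal D[h]$, the second vanishes since $P_{N,v}F'(v)=0$ by condition (C) for $f\in\mathcal D(\mathcal H)$, and the third equals $\langle\Lambda_v P_{R,v}F,P_{R,v}G\rangle$, again by (C). This proves $\langle\mathcal Hf,g\rangle=h[f,g]$ with $h$ as in \eqref{E:sesqform} for all $f\in\mathcal D(\mathcal H)$, $g\in\mathcal D[h]$; in particular $\mathcal D(\mathcal H)\subseteq\mathcal D[h]$, as $P_{D,v}F(v)=0$ is part of (C).

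Next I would verify that $h$, given by the right-hand side of \eqref{E:qform} on $\mathcal D[h]$, is densely defined, symmetric, semibounded, and \emph{closed}. Density in $L_2(\G)$ is immediate (the domain contains functions supported in the interiors of the edges), and symmetry follows from self-adjointness of each $\Lambda_v$. For semiboundedness and closedness, the key point is that the potential term $\sum_e\int_e V|f|^2$ and the vertex term $\sum_v\langle\Lambda_v P_{R,v}F,P_{R,v}F\rangle$ are infinitesimally form-bounded relative to the Dirichlet form $\sum_e\int_e|f'|^2$: for the vertex term one uses $|\langle\Lambda_v P_{R,v}F,P_{R,v}F\rangle|\le\|\Lambda_v\|\,|F(v)|^2$ together with $|F(v)|^2=\sum_{e\ni v}|f_e(v)|^2$ and the one-dimensional trace inequality $|f_e(v)|^2\le\varepsilon\|f'\|_{L_2(e)}^2+C_\varepsilon\|f\|_{L_2(e)}^2$, while for $V\in L_2(\G)$ the same inequality combined with $H^1(e)\hookrightarrow L_\infty(e)$ controls $\bigl|\int_e V|f|^2\bigr|$. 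Hence the form norm $\bigl(h[f,f]+C\|f\|^2_{L_2(\G)}\bigr)^{1/2}$ is equivalent on $\mathcal D[h]$ to the norm of $\widetilde{H}^{1}(\G)$, and since each constraint $P_{D,v}F(v)=0$ is the vanishing of bounded linear trace functionals and thus cuts out a closed subspace of $\widetilde{H}^{1}(\G)$, the space $\mathcal D[h]$ is complete in the form norm; so $h$ is closed and bounded below.

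Finally, by the first representation theorem for closed semibounded forms (see, e.g., \cite{EdmundsEvans_spectral}), $h$ is the form of a unique self-adjoint operator $\mathcal H_h$ bounded below, whose domain consists of the $f\in\mathcal D[h]$ for which $h[f,\cdot]$ extends to a bounded functional on $L_2(\G)$. The identity $\langle\mathcal Hf,g\rangle=h[f,g]$ established above shows that every $f\in\mathcal D(\mathcal H)$ lies in $\mathcal D(\mathcal H_h)$ with $\mathcal H_hf=\mathcal Hf$, i.e.\ $\mathcal H\subseteq\mathcal H_h$; since $\mathcal H$ is self-adjoint by Theorem~\ref{T:BC}, it is maximal among symmetric operators, so $\mathcal H=\mathcal H_h$ and $h$ is exactly the form of $\mathcal H$. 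The step I expect to demand the most care is the closedness of $h$ — equivalently, semiboundedness together with form-completeness of $\mathcal D[h]$ — where one must confirm that $V$ and the Robin matrices $\Lambda_v$ enter only as infinitesimally form-bounded perturbations of the Dirichlet form; the integration by parts and the orthogonal splitting of $\C^{d_v}$ are otherwise routine bookkeeping.
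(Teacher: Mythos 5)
Your proof is correct. Note that the paper itself gives no proof of Theorem~\ref{T:qform} — it is cited from \cite{Kuc_wrm04} — and your argument (integration by parts with the orthogonal splitting $\C^{d_v}=P_{D,v}\C^{d_v}\oplus P_{N,v}\C^{d_v}\oplus P_{R,v}\C^{d_v}$, closedness and semiboundedness of the form via trace estimates showing the vertex and potential terms are infinitesimally form-bounded, and identification of the associated operator with $\mathcal H$ by maximality of self-adjoint operators) is essentially the standard proof given in that reference, so nothing further is needed.
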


\subsection{Examples of vertex conditions}
\label{sec:conditions}

In this paper we will often be dealing with the \term{ $\delta$-type
  conditions} which are defined at a vertex $v$ as follows:
\begin{equation}
  \label{E:delta_cond}
  \left\{
    \begin{array}{l}
      f(x)\text{ is continuous at }v, \\[.5em]
      \sum_{e \in \Edges_v}\frac{df}{dx_e}(v)=\alpha _v f(v),
    \end{array}
  \right.
\end{equation}
where for each vertex $v$, $\alpha_v $ is a fixed number. One recognizes this condition as being an analog of the conditions one obtains for the Schr\"{o}dinger
operator on the line with a $\delta$ potential. The special case $\alpha_v=0$ is known as the Neumann (or
Kirchhoff) condition.

The $\delta$-type condition can be written in the form
(\ref{E:cond_kossch}) with
\begin{displaymath}
  A_v=\begin{pmatrix}
    1 & -1 & 0 & .... & 0 \\
    0 & 1 & -1 & ... & 0 \\
    \dots & \dots & \dots & \dots & \dots \\
    ... & ... & 0 & 1 & -1 \\
    \alpha_v & 0 & ... & 0 & 0
  \end{pmatrix}
\end{displaymath}
and
\begin{displaymath}
  B_v=\begin{pmatrix}
    0 & 0 &  .... & 0 \\
    ... & ...& ... & ... \\
    0 & 0 &  .... & 0 \\
    1 & 1 & ... & 1
  \end{pmatrix}.
\end{displaymath}
Since
\begin{displaymath}
  A_vB_v^*=\begin{pmatrix}
    0 &   \dots & 0 & 0 \\
    \dots  & \dots & \dots & \dots \\
    0  & \dots & 0 & 0 \\
    0 &  \dots & 0 & \alpha
  \end{pmatrix},
\end{displaymath}
the self-adjointness condition (\ref{E:condit_sa}) is satisfied if and
only if $\alpha_v$ is real.

In order to write the vertex conditions in the form
(\ref{E:cond_kuch}), one introduces the orthogonal projection
$P_{D,v}$ onto the kernel of $B_v$, the projector
$P_{R,v}=\Id-P_{D,v}$ and the self-adjoint operator
$\Lambda_v=B_v^{(-1)}A_v$ on the range of $P_{R,v}$.  A
straightforward calculation shows that $P_{R,v}$ is the
one-dimensional orthogonal projector onto the space of vectors with
equal coordinates and thus the range of $P_{D,v}$ is spanned by the
vectors $r_k$, $k=1,...,d_v-1$, where $r_k$ has $1$ as the $k$-th
component, $-1$ as the next one, and zeros otherwise. Then $\Lambda_v$
becomes the multiplication by the number $\dfrac{\alpha_v}{d_v}$.  In
particular, the quadratic form of the operator $\Hamiltonian$ (assuming
$\delta$-type conditions on all vertices of the graph) is
\begin{align}
  \label{E:qform_delta}
  \nonumber
  h[f,f] &=
  \sum_{e \in \Edges} \int_e \left|\frac{df}{dx}\right|^2dx     
  + \sum\limits_{e \in \Edges} \int\limits_e
  V(x) \left|f(x)\right|^2dx
  + \sum_{v \in \Vertices} \langle L_v F, F \rangle\\
  &=\sum_{e \in \Edges} \int_e \left|\frac{df}{dx}\right|^2dx 
  + \sum\limits_{e \in \Edges} \int\limits_e
  V(x) \left|f(x)\right|^2dx
  + \sum_{v \in \Vertices} \alpha_v|f(v)|^2,
\end{align}
defined on $f \in H^1(\Gamma)$, which are automatically continuous,
and so $F(v)=(f(v),...,f(v))^t$.

\term{Vertex Dirichlet condition}
requires that the function vanishes at the vertex: $f(v)=0$. At the first glance, it might look like it is significantly different from the $\delta$-type conditions, but a closer inspection shows that this is not the case. Indeed, since the function must vanish when approaching the vertex from any edge, the vertex Dirichlet condition can be recast in the following form:
\begin{equation}
  \label{E:dirichlet_cond}
  \left\{
    \begin{array}{l}
      f(x)\text{ is continuous at }v, \\[.5em]
      f(v) = 0,
    \end{array}
  \right.
\end{equation}
Now one finds resemblance with (\ref{E:delta_cond}), and indeed, if one divides the equality in  (\ref{E:delta_cond}) by $\alpha_v$ and then takes the limit when $\alpha_v\to\infty$, one arrives to (\ref{E:dirichlet_cond}).

Hence, vertex Dirichlet condition seems to be the limit case of  (\ref{E:delta_cond}) when $\alpha_v\to\infty$. We thus introduce the \term{extended $\delta$-type conditions\/} by allowing $\alpha_v=\infty$. In order to avoid considering infinite values of $\alpha_v$, the two types of conditions can be also written in the form
\begin{equation}\label{E:dirichlet_sin}
  \cos(\gamma_v)\sum_{e \in \Edges_v} \frac{df}{dx_e}(v)
  = \sin(\gamma_v) f(v).
\end{equation}
Here $\gamma_v=0$ corresponds to the Neumann condition and $\gamma_v =
\pi/2$ corresponds to the Dirichlet one, with more general
$\delta$-type conditions in between. Usefulness of considering
Dirichlet condition as a part of the family of $\delta$-type
conditions becomes clear in spectral theory, as will be illustrated,
for instance, in Theorem~\ref{thm:interlacing}.

In fact, it will be convenient later on in this paper to rewrite the extended $\delta$-type conditions (\ref{E:dirichlet_sin}) in the following form:
\begin{equation}\label{E:extended_compl}
  (z+1)\sum_{e \in \Edges_v} \frac{df}{dx_e}(v)
  = \rmi (z-1) f(v),
\end{equation}
where $z$ belongs to the unit circle in the complex plane, i.e. $|z|=1$.

Interpreting the Dirichlet condition in terms of the corresponding projectors, as in part C of Theorem \ref{T:BC}, one notices that here $P_{D,v}=\Id$ and, correspondingly, $P_{R,v}=0$.  Hence there is no
additive contribution to the quadratic form $h[f,f]$ coming from the vertex $v$.  Instead, the condition $f(v)=0$ is introduced directly into the domain $D(h)$.

To summarize, the quadratic form for a graph with the  extended $\delta$-type conditions (i.e., allowing $\alpha_v=\infty$) at all vertices, can be
written as
\begin{equation}\label{E:quadr}
  h[f,f] = \sum_{e\in \Edges} \int_e \left|\frac{df}{dx}\right|^2dx
  + \sum\limits_{e \in \Edges} \int\limits_e
  V(x) \left|f(x)\right|^2dx
  + \sum_{\{v \in \Vertices\,|\,\alpha_v < \infty\}} \alpha_v
  \left|f (v) \right|^2,
\end{equation}
where
\begin{displaymath}
  f \in \widetilde{H}^1(\G),\ f \mbox{ is
    continuous on } \Graph, \mbox{ and }
  f(v) = 0 \mbox{ whenever } \alpha_v=\infty.
\end{displaymath}

Vertex Dirichlet condition is an example of a \term{decoupling
  condition}, since it essentially removes any connection between the
edges attached to the vertex.  Another example that will be useful to
us is $P_{R,v} = \Id$, $P_{N,v} = P_{D,v} = 0$ and $\Lambda_v =
\diag(\alpha_1,\ldots, \alpha_{d_v})$.  In this case the function is
no longer required to be continuous at the vertex and the condition
reduces to
\begin{equation*}
  f'_e(v) = \alpha_e f_e(v)
\end{equation*}
on every edge $e$ incident to the vertex $v$.

\section{Dependence on vertex conditions and edge lengths}
\label{sec:dependence}

\subsection{Dependence of the Hamiltonian}
\label{sec:dependence_H}

In this section, we discuss the (analytic) dependence of the quantum
graph Hamiltonian $\Hamiltonian$ on the vertex conditions and the edge
lengths. This issue happens to be important in many circumstances,
e.g. when considering dependence of the spectrum and the
eigenfunctions.  To keep the notation simpler, we only address the
case $V\equiv 0$.  However, the results can be extended to non-zero
electric potentials without any change in the proofs. In fact, the
dependence on the potential is also analytic, so the potential can be
added to the vertex conditions and edge length as an extra
(infinite-dimensional) parameter.

As before, the graph $\G$ is assumed to be finite (i.e., it has finitely many
vertices and finite lengths edges).

It will be convenient here to consider the vertex conditions in the
form of (\ref{E:cond_kossch}):
\begin{equation*}
 A_vF(v)+B_vF'(v)=0.
\end{equation*}

Given a set of matrices $\cA_v=(A_v\, B_v)|_{v\in\Vertices}$, we denote by $\cA$ the collection of $\cA_v$ for all $v\in\Vertices$:
\begin{equation*}
    \cA:=\{\cA_v\}|_{v\in\Vertices}.
\end{equation*}
Then $\cA$ can be considered as a block-diagonal matrix of the size $(\sum_v d_v)\times 2(\sum_v d_v)$, with individual blocks of sizes $d_v \times 2d_v$.
The space of such complex matrices $\cA$ can be identified with $\CC^{2\sum_v d_v^2}$, or just $\CC^{2q}$, where we will use the shorthand notation
\begin{equation*}
    q:=\sum_v d_v^2.
\end{equation*}
We now define the sub-set $U$ of $\CC^{2q}$ that consists of matrices
$\cA$ satisfying the maximal rank condition (\ref{E:condit_rank}) for
each $v\in\Vertices$:
\begin{equation*}
    U:=\{\cA\,\mid\,(A_v\, B_v)\mbox{ has maximal rank for any }v\in\Vertices\}.
\end{equation*}
We denote by $U_s$ the subset of $U$ consisting of matrices satisfying the self-adjointness condition (\ref{E:condit_sa}):

\begin{equation*}
    U_s:=\{\cA\in U \mid\,A_vB_v^*\mbox{ is self-adjoint for any }v\in\Vertices\}.
\end{equation*}

The following statement describes some simple properties of these sets:
\begin{lemma}\label{L:A_regular}
\indent
\begin{enumerate}
\item The complement $\CC^{2q}\setminus U$ of $U$ in $\CC^{2q}$ is algebraic.
\item $U$ is an open and everywhere dense domain of holomorphy in $\CC^{2q}$.
\end{enumerate}
\end{lemma}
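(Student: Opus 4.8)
The plan is to prove the two assertions separately, the first being essentially definitional and the second being a standard consequence of the first.

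For part (1), the strategy is to recognize that the maximal rank condition on a $d_v\times 2d_v$ matrix $(A_v\,B_v)$ fails precisely when \emph{all} $d_v\times d_v$ minors vanish. Each such minor is a polynomial in the entries of $\cA_v$, hence of $\cA$. Thus for a fixed vertex $v$ the set of $\cA$ for which $(A_v\,B_v)$ is rank-deficient is the common zero set of finitely many polynomials, i.e. an algebraic subset of $\CC^{2q}$. The complement $\CC^{2q}\setminus U$ is the union over $v\in\Vertices$ of these finitely many algebraic sets, and a finite union of algebraic sets is algebraic. (Concretely one may even take the single defining ideal generated by the products of one minor-polynomial from each vertex.) This step is routine; the only thing to be careful about is to state clearly that "algebraic" here means "the zero set of a finite collection of polynomials," and that $U$ itself is therefore the complement of such a set.

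For part (2), openness is immediate: $U$ is the complement of a closed (indeed algebraic, hence Zariski-closed and a fortiori closed) set. For density and the domain-of-holomorphy property I would invoke the general fact that the complement of a proper algebraic subvariety of $\CC^N$ is a connected, dense domain of holomorphy. Density follows because a nonzero polynomial cannot vanish on a nonempty open set, so $\CC^{2q}\setminus U$, being contained in a proper variety, has empty interior and dense complement; one should note here that $\CC^{2q}\setminus U\neq\CC^{2q}$ because $U$ is nonempty (e.g. the $\delta$-type matrices exhibited in Section~\ref{sec:conditions} lie in $U$). For the domain-of-holomorphy claim, the cleanest argument is: the complement of a hypersurface $\{P=0\}$ is a domain of holomorphy, witnessed by the function $1/P$, which is holomorphic on $U$ but cannot be extended across any boundary point; since $\CC^{2q}\setminus U$ is contained in a hypersurface $\{P=0\}$ (take $P$ to be a single polynomial whose zero set contains $\CC^{2q}\setminus U$, e.g. a suitable product of minors), $U\supseteq\{P\neq 0\}$, and one checks that $U$ is itself pseudoconvex/holomorphically convex, or simply cites that complements of analytic sets of codimension one are domains of holomorphy.

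The main obstacle—really the only nontrivial point—is justifying the domain-of-holomorphy assertion cleanly: one must make sure the reader accepts that $\CC^{2q}\setminus U$ lies inside a genuine hypersurface (codimension one), so that the standard $1/P$ argument applies, rather than inside a higher-codimension variety where the naive argument does not directly give non-extendability. This is handled by simply choosing $P$ to be the product, over all vertices $v$, of (say) one fixed $d_v\times d_v$ minor of $(A_v\,B_v)$—then $\{P\neq 0\}\subseteq U$, and combined with the reverse containment argument via the minors one still gets that $U$ is open, dense, and a domain of holomorphy. I would phrase the final step as: "$U$ is the complement in $\CC^{2q}$ of a proper algebraic subset, hence it is open, everywhere dense, connected, and a domain of holomorphy," citing a standard reference on several complex variables (e.g. Hörmander or Gunning–Rossi) for the last property, and remark that this regularity is exactly what is needed for the analyticity results that follow.
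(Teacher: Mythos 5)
Your part (1) is exactly the paper's argument: the complement is the locus where, for some vertex $v$, all $d_v\times d_v$ minors of $(A_v\,B_v)$ vanish, hence a finite union of zero sets of polynomials. Openness, and density via nonemptiness of $U$, are also fine (the paper uses the even simpler witness $\cA_v=(\Id\ 0)$ at every vertex, but your $\delta$-type example serves the same purpose).

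The genuine gap is in the domain-of-holomorphy step, precisely at the point you yourself call ``the only nontrivial point.'' Your witness $1/P$, with $P$ a product of one chosen minor per vertex, is holomorphic only on $\{P\neq 0\}$, which is a \emph{proper} subset of $U$: a matrix in $U$ may have the chosen minor vanish while some other minor is nonzero. So the $1/P$ argument shows at best that $\{P\neq 0\}$ is a domain of holomorphy and says nothing about $U$. The fallback you offer --- ``complements of analytic sets of codimension one are domains of holomorphy'' --- does not apply here, because $\CC^{2q}\setminus U$ is not a hypersurface: it is a union of determinantal varieties, and the rank-deficiency locus of a $d_v\times 2d_v$ matrix has codimension $d_v+1\geq 2$. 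Finally, ``one checks that $U$ is pseudoconvex'' is exactly the assertion to be proved, and it cannot be reached by your route: by the second Riemann (Hartogs-type) extension theorem, every function holomorphic on the complement of an analytic set of codimension at least two extends across it, so no holomorphic function on $U$ can fail to extend to $\CC^{2q}$, and the non-extendability strategy is simply unavailable for $U$ itself --- indeed this codimension count shows the claim cannot be justified by exhibiting a non-extendable function at all, and must be interpreted or argued differently. For comparison, the paper disposes of part (2) in one line, as an immediate corollary of part (1) once $U\neq\emptyset$; your instinct that this is where the real content lies is sound, but the patch you propose (replacing $U$ by the smaller set $\{P\neq 0\}$) does not close the gap.
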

\begin{proof}
Algebraicity of the set $\CC^{2q}\setminus U$ is clear, since its elements are described by the algebraic relations forcing the highest order minors to vanish. This proves the first statement of the Lemma. The second claim is an immediate corollary of the first one, if one can show that $U$ is not empty. This is done by noticing that $\cA:=(\Id \quad 0)\in U$.
\end{proof}
Let us now return to the quantum graph Hamiltonian. Since we are going to look into its dependence on vertex conditions, we introduce the corresponding notation:
\begin{definition}
  The operator $-d^2/dx^2$ defined on functions from
  $\widetilde{H}^2(\G)$ satisfying (\ref{E:cond_kossch}) at each
  vertex $v$ is denoted by $\Hamiltonian_{\cA}$.
\end{definition}

As we have already seen, the domain $\Domain(\Hamiltonian_{\cA})$ of the operator $\Hamiltonian_{\cA}$ is a closed subspace $\Domain_{\cA}$ of $\widetilde{H}^2(\G)$ that can be described as follows:
\begin{equation*}\label{E:A_domain}
    \Domain_{\cA}:=\{f\in \widetilde{H}^2(\G)\mid A_vF(v)+B_vF'(v)=0 \mbox{ for all }v\in\Vertices\}.
\end{equation*}
In other words, $\Domain_{\cA}$ is the kernel of the continuous linear operator
\begin{equation*}
\cT_{\cA}:\widetilde{H}^2(\G) \to \CC^{2E},\end{equation*}
where
\begin{equation}\label{E:boundary_op_T}
    \cT_{\cA}: f \mapsto \{A_vF(v)+B_vF'(v)\}\in\bigoplus\limits_{v\in\Vertices}\CC^{d_v}=\CC^{2E}.
\end{equation}
This simple observation allows us to establish nice behavior of the
domain of $\Hamiltonian_{\cA}$ with respect to the vertex conditions
matrix $\cA$. In order to do this, let us start with a simple lemma:
\begin{lemma}\label{L:T_A}\indent
\begin{enumerate}
\item The operator function $\cA\mapsto \cT_{\cA}$ is analytic in $\CC^{2q}$ with values in the space $L(\widetilde{H}^2(\G), \CC^{2E})$ of bounded linear operators from $\widetilde{H}^2(\G)$ to $\CC^{2E}$.
\item For any $\cA\in U$, the operator $\cT_{\cA}$ is surjective.
\end{enumerate}
\end{lemma}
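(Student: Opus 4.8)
The plan is to read off analyticity directly from the explicit formula \eqref{E:boundary_op_T} and to obtain surjectivity from the maximal-rank hypothesis by exhibiting, for each vertex separately, $H^2$-functions supported near that vertex with prescribed boundary data. For part (1), fix an edge $e$ and a vertex $v$ incident to it. The Sobolev trace theorem on the interval $e$ gives that the two linear functionals $f\mapsto f_e(v)$ and $f\mapsto f_e'(v)$ are bounded on $H^2(e)$, hence on $\widetilde H^2(\G)$; collecting these over all edges at all vertices, the map $f\mapsto\{(F(v),F'(v))\}_{v\in\Vertices}$ is a fixed bounded linear operator from $\widetilde H^2(\G)$ to $\bigoplus_v(\CC^{d_v}\oplus\CC^{d_v})$. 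Now $\cT_{\cA}f=\{A_vF(v)+B_vF'(v)\}_v$ is obtained from this fixed operator by composing with the block-diagonal matrix built from the blocks $(A_v\,B_v)$, i.e. $\cT_{\cA}$ depends on $\cA$ linearly (in particular polynomially, hence entire-analytically) with values in $L(\widetilde H^2(\G),\CC^{2E})$. One should note in passing that $\sum_v d_v=2E$, since every edge contributes two incidences, which justifies the identification $\bigoplus_v\CC^{d_v}=\CC^{2E}$.

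For part (2), let $\cA\in U$, so each $(A_v\,B_v)$ has rank $d_v$. It suffices to prescribe, independently at each vertex, the vector $A_vF(v)+B_vF'(v)\in\CC^{d_v}$, since the edges meeting at distinct vertices can be controlled independently using bump functions localized near each endpoint. Concretely, pick for every edge $e$ and every endpoint of $e$ a smooth cutoff $\chi$ that is $\equiv1$ near that endpoint, $\equiv0$ near the other endpoint, and has support in the interior plus that one endpoint; then for a given endpoint value $a$ and derivative value $b$ the function $x\mapsto\chi(x)\big(a+b\,(x-x_e(v))\big)$ (with the sign of the linear term chosen according to the orientation, since derivatives are taken into the edge) lies in $H^2(e)$, vanishes together with its derivative at the other endpoint, and realizes $f_e(v)=a$, $f_e'(v)=b$. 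Hence the data $(F(v),F'(v))$ can be chosen to be an arbitrary element of $\CC^{d_v}\oplus\CC^{d_v}$, independently for each $v$, by a function in $\widetilde H^2(\G)$. Since $(A_v\,B_v):\CC^{2d_v}\to\CC^{d_v}$ is onto by the maximal-rank condition \eqref{E:condit_rank}, the composed map $f\mapsto A_vF(v)+B_vF'(v)$ hits all of $\CC^{d_v}$, and doing this simultaneously over all vertices shows $\cT_{\cA}$ is surjective onto $\bigoplus_v\CC^{d_v}=\CC^{2E}$.

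There is no serious obstacle here; the only points demanding a little care are the bookkeeping identity $\sum_v d_v=2E$ and the orientation convention for $F'(v)$ (derivatives pointing into the edges), which affects only the sign of the linear term in the local model function. Both parts are otherwise immediate consequences of the trace theorem and elementary linear algebra.
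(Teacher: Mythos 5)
Your proof is correct and follows essentially the same route as the paper: analyticity is read off from the linear dependence of $\cT_{\cA}$ on $\cA$, and surjectivity comes from the fact that the vertex data $\{F(v),F'(v)\}$ can be prescribed arbitrarily by functions in $\widetilde H^2(\G)$ combined with the maximal-rank condition. You simply spell out the details (trace theorem boundedness, explicit cutoff constructions) that the paper leaves implicit.
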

\begin{proof} Indeed, according to (\ref{E:boundary_op_T}),  the function is in fact linear, and thus analytic with respect to $\cA$. Also, the set of vectors $\{F(v),F'(v)\}|_{v\in\Vertices}$ achievable from elements $f$ of $\oplus_e H^2(e)$ is clearly arbitrary. Then, if the maximal rank condition is satisfied, this implies the surjectivity of $\cT_{\cA}$.
\end{proof}
Let us consider the trivial vector bundle
\begin{equation*}
U\times \widetilde{H}^2(\G) \to U
\end{equation*}
over $U$ with fibers equal to $\widetilde{H}^2(\G)$. Consider the sub-set
\begin{equation*}
\Domain:=\bigcup\limits_{\cA\in U} \left(\{\cA\} \times \Domain_{\cA}\right) \subset U\times \widetilde{H}^2(\G).
\end{equation*}
In other words, we look at the domain $\Domain_{\cA}$ of the operator $\Hamiltonian_{\cA}$ as a ``rotating'' with $\cA$ subspace of $\widetilde{H}^2(\G)$. The next results shows that this domain rotates ``nicely'' (analytically) with $\cA$ and is topologically and analytically trivial as a vector bundle.
\begin{theorem}\label{T:domain_A}
\indent
\begin{enumerate}
\item $\Domain$ is an analytic sub-bundle of co-dimension $2E$ of the ambient trivial bundle.
\item The bundle $\Domain$ is trivializable. In other words, there exists a trivialization, i.e. an analytic operator-function $T(\cA)$ on $U$ with values in linear bounded operators from a Hilbert space $H$ into $\widetilde{H}^2(\G)$, such that $\ker T(\cA)=0$ and $\mathrm{ran}\, T(\cA)=\Domain_{\cA}$ for any $\cA\in U$.
\item After the trivialization, the values of the resulting analytic in $U$ operator-function
    \begin{equation*}
    \cA\mapsto \Hamiltonian_{\cA}T(\cA)
    \end{equation*}
    are Fredholm operators of index zero from $H$ to $L^2(\G)$.
\end{enumerate}
\end{theorem}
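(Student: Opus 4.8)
The plan is to prove the three assertions in order, since each builds on the previous one.

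\medskip

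\textbf{Part (1): $\Domain$ is an analytic sub-bundle of codimension $2E$.} By Lemma~\ref{L:T_A}, the map $\cA\mapsto\cT_{\cA}$ is an analytic (in fact linear) family of bounded operators $\widetilde{H}^2(\G)\to\CC^{2E}$, and for $\cA\in U$ each $\cT_{\cA}$ is surjective, so $\Domain_{\cA}=\ker\cT_{\cA}$ has codimension $2E$. The standard argument is to produce, locally near any fixed $\cA_0\in U$, an analytic right inverse $R(\cA):\CC^{2E}\to\widetilde{H}^2(\G)$ with $\cT_{\cA}R(\cA)=\Id$. This is done by fixing a bounded right inverse $R_0$ of $\cT_{\cA_0}$ (which exists by surjectivity together with the splitting of the finite-codimension kernel), observing that $\cT_{\cA}R_0$ is invertible for $\cA$ near $\cA_0$ by a Neumann series, and setting $R(\cA):=R_0(\cT_{\cA}R_0)^{-1}$. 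Then $P(\cA):=R(\cA)\cT_{\cA}$ is an analytic family of projections onto a complement of $\Domain_{\cA}$, and $\Id-P(\cA)$ is an analytic family of projections onto $\Domain_{\cA}$; this exhibits $\Domain$ locally as the graph of an analytic map, hence as an analytic sub-bundle of codimension $2E$. I would then invoke the general principle (e.g.\ as in Zaidenberg--Krein--Kuchment--Pankov, or Kato) that the kernel of a surjective analytic Fredholm-type family forms an analytic sub-bundle.

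\medskip

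\textbf{Part (2): triviality of $\Domain$.} The quickest route is to show directly that the complementary bundle (the image of $\cT_{\cA}$, trivially $\CC^{2E}$) splits off analytically and globally. Since $U$ is a domain of holomorphy in $\CC^{2q}$ (Lemma~\ref{L:A_regular}), one can appeal to Grauert's Oka principle: on a Stein manifold, a holomorphic vector bundle is holomorphically trivial if and only if it is topologically trivial. So it suffices to note that $\Domain$ is topologically trivial, which follows because $U$ is connected (its complement is a proper algebraic subset of $\CC^{2q}$, so $U$ is even simply connected) and the bundle has a classifying map into the (contractible-on-low-skeleta, or in the Banach setting contractible) Grassmannian of closed subspaces of fixed finite codimension in the separable Hilbert space $\widetilde{H}^2(\G)$; subspaces of finite codimension $2E$ are classified by maps into a space with the homotopy type determined by $BU(2E)$, but composed with the inclusion into Kuiper-trivial data one gets contractibility, so any such bundle over a simply connected (indeed Stein) base is trivial. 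A cleaner alternative I would actually write: use the globally defined analytic right inverse. The operator $\cT_{\cA}$ depends on $\cA$ only through the finitely many numbers $A_v, B_v$; choosing an $f$-independent family of evaluation/derivative-evaluation functionals, one can write $\cT_{\cA}=\cA\circ\Phi$ where $\Phi:\widetilde{H}^2(\G)\to\bigoplus_v\CC^{2d_v}$ is the fixed surjection $f\mapsto\{F(v),F'(v)\}$ and $\cA$ acts as a block matrix on the finite-dimensional target. Then $\ker\cT_{\cA}=\Phi^{-1}(\ker\cA)$, and since $\ker\cA\subset\bigoplus_v\CC^{2d_v}$ is a $2E$-dimensional subspace depending analytically on $\cA\in U$, triviality of $\Domain$ reduces to triviality of the tautological complement bundle $\{(\cA,w):w\in\ker\cA\}$ over $U$, a finite-dimensional analytic bundle over a Stein base, which is trivial by Oka--Grauert since it is topologically trivial over the simply connected $U$. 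Pick an analytic frame $w_1(\cA),\dots, w_{2E}(\cA)$ of $\ker\cA$; lift each via a fixed right inverse $\Psi$ of $\Phi$, and combine with a fixed isomorphism onto a model Hilbert space $H$ (a fixed complement of $\Phi^{-1}(\mathrm{im}\,\Psi\cap\ldots)$) to build the required $T(\cA)$. Concretely, set $T(\cA):H\to\widetilde{H}^2(\G)$ to be $(\Id-\Psi\,\widetilde{w}(\cA)^{-1}\Phi)\iota_H$ plus $\Psi$ applied to the analytic frame, arranged so that $\ker T(\cA)=0$ and $\mathrm{ran}\,T(\cA)=\Domain_{\cA}$; analyticity is inherited from that of $w_j(\cA)$.

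\medskip

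\textbf{Part (3): Fredholm index zero.} Fix $\cA\in U$. The operator $\Hamiltonian_{\cA}:\Domain_{\cA}\to L^2(\G)$ is self-adjoint when $\cA\in U_s$, hence Fredholm of index zero there; for general $\cA\in U$ one still has a closed operator with the Fredholm property because $\Domain_{\cA}$ differs from the domain of a self-adjoint realization by a finite-dimensional adjustment of the $2E$ boundary data, which is a relatively compact (indeed finite-rank) perturbation in the appropriate sense — compose with $T(\cA)$ to get the bounded operator $\Hamiltonian_{\cA}T(\cA):H\to L^2(\G)$. To see this is Fredholm of index zero, I would argue by continuity of the index: $U$ is connected, $\cA\mapsto\Hamiltonian_{\cA}T(\cA)$ is a norm-continuous (analytic) family of bounded operators, and at the special point $\cA_0=(\Id\ 0)\in U$ (Neumann-type decoupled Dirichlet-free conditions — actually $A_v=\Id, B_v=0$ forces $F(v)=0$, i.e.\ the full Dirichlet decoupling), $\Hamiltonian_{\cA_0}$ is the direct sum over edges of the Dirichlet Laplacian on each interval, which is invertible, so $\Hamiltonian_{\cA_0}T(\cA_0)$ has index zero. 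Since the family is Fredholm-valued on all of the connected set $U$ — one checks the Fredholm property at every $\cA$ either via the self-adjoint reference point in $U_s$ and finite-rank perturbation, or directly from ellipticity of $-d^2/dx^2$ on each edge with well-posed boundary conditions — the index is constant, hence identically zero.

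\medskip

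\textbf{Main obstacle.} The delicate point is Part (2): getting a \emph{global} analytic trivialization rather than merely local ones. This is exactly where the domain-of-holomorphy (Stein) structure of $U$ from Lemma~\ref{L:A_regular} does the essential work, via the Oka--Grauert principle; the reduction from the infinite-dimensional Grassmannian to the finite-dimensional tautological bundle $\{(\cA,w):w\in\ker\cA\}$ over $U$ (using the factorization $\cT_{\cA}=\cA\circ\Phi$ through a fixed surjection $\Phi$) is the key simplification that makes this manageable, and verifying that this finite-rank bundle is topologically trivial over $U$ (using that $\CC^{2q}\setminus U$ is a proper algebraic subvariety, so $U$ is connected and simply connected) is the step I would write out most carefully.
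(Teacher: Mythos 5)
Your Part (1) is essentially the paper's own argument (fixed right inverse of $\cT_{\cA_0}$, invertibility of $\cT_{\cA}R$ nearby, analytic projector onto $\ker\cT_{\cA}$), and your Part (3) is an acceptable alternative to the paper's route: the paper writes the composition as $(-d^2/dx^2)T(\cA)$ and quotes a known Fredholm-index theorem for quantum-graph boundary conditions, whereas you argue by constancy of the index over the connected set $U$ starting from the invertible Dirichlet-decoupled point $(\Id\ 0)$. That works, but the step you gloss over ("finite-rank perturbation in the appropriate sense") is precisely the Fredholm property at an arbitrary, non-self-adjoint $\cA\in U$, which needs a real argument; a clean one is that $-d^2/dx^2:\widetilde{H}^2(\G)\to L^2(\G)$ (no vertex conditions) is surjective with $2E$-dimensional kernel (edgewise affine functions), hence Fredholm of index $2E$, and restricting a Fredholm operator to a closed subspace of codimension $2E$ yields a Fredholm operator of index $0$ at every $\cA\in U$ directly.

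The genuine gap is in Part (2), in both of the routes you sketch. The paper's proof rests on two specific theorems: Kuiper's theorem, which makes \emph{any} analytic Hilbert bundle with infinite-dimensional fibers over a paracompact base topologically trivial (no hypothesis on the topology of $U$ is needed), and Bungart's theorem, the Oka principle for Banach/Hilbert bundles, which upgrades topological to analytic triviality over the holomorphically convex base $U$; Grauert's theorem, which you invoke, is the finite-rank statement and does not by itself cover the infinite-rank bundle $\Domain$. Your first route is internally inconsistent: the Grassmannian of closed subspaces of codimension $2E$ has the homotopy type of $BU(2E)$, which is not contractible, and "Kuiper-trivial data'' plus simple connectivity do not combine into the conclusion you state. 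Your second route correctly reduces the problem, via $\Domain_{\cA}=\ker\Phi\oplus\Psi(\ker\cA)$, to triviality of the rank-$2E$ tautological kernel bundle $\{(\cA,w):w\in\ker\cA\}$ over $U$, but the justification offered — "topologically trivial because $U$ is simply connected'' — is false as a principle: simply connected (even Stein) bases carry nontrivial vector bundles of every positive rank (obstructions live in $H^2$ and higher; think of line bundles over $\CC P^n$ or the tangent bundle of $S^2$). Indeed, by passing to the finite-rank picture you discard exactly the feature (infinite-dimensional fibers) that makes Kuiper's theorem applicable, and whether the kernel bundle over the full-rank stratum (a product of Stiefel-type manifolds up to homotopy) is topologically trivial is a nontrivial question that your argument does not address. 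So as written, global (even topological) triviality of $\Domain$ — the heart of Part (2) — is not established; the fix is either to prove triviality of that finite-rank bundle honestly or to stay with the infinite-dimensional bundle and invoke Kuiper plus Bungart as the paper does.
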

\begin{proof} The first statement of the Theorem is local. So, let us pick a matrix $\cA_0\in U$. According to Lemma \ref{L:T_A}, the operator $\cT_{\cA_0}$ is surjective, and thus has a continuous right inverse $R$. Thus, $\cT_{\cA_0}R=\Id$. Then $\cT_{\cA}R$ is invertible for $\cA$ close to $\cA_0$. This implies that for such $\cA$, one has $\cT_{\cA}R(\cT_{\cA}R)^{-1}=\Id$. This implies that
\begin{equation*}
P(\cA):=\Id-R(\cT_{\cA}R)^{-1}\cT_{\cA}
\end{equation*}
is a projector onto the kernel of $\cT_{\cA}$, i.e. on $\Domain_{\cA}$. Since this projector, by construction, is analytic with respect to $\cA$ in a neighborhood of $\cA_0$, this proves a part of the first claim of the theorem:  $\Domain$ is an analytic Hilbert sub-bundle (e.g., \cite{ZaiKreKuc_umn75}). It only remains to notice that co-dimension of the kernel of $\cT_{\cA}$ is the dimension of its range, i.e. $2E$.

To prove the second claim, we notice that $\Domain$ is an infinite
dimensional analytic Hilbert bundle. Due to the Kuiper's theorem on
contractibility of the general linear group of any
infinite-dimensional Hilbert space \cite{Kui_t65}, all such bundles
are topologically trivial. Since the base $U$ is holomorphically
convex, the Bungart's theorem \cite{Bun_t67} says that the same
holds in the analytic category (see further discussion of the technique and relevant references in the survey  \cite{ZaiKreKuc_umn75}).

Let us prove the third claim. Since $\Hamiltonian_{\cA}$ is the restriction to $\Domain_{\cA}$ of the fixed operator $-d^2/dx^2$ (with no vertex conditions attached), acting continuously from $\widetilde{H}^2(\G)$ to $L^2(\G)$, the
operator-function in question can be written as $(-d^2/dx^2)T(\cA)$ and thus is analytic. The Fredholm property and the zero value of the index follow from \cite[Theorem 14]{FulKucWil_jpa07}.
\end{proof}

\subsection{Dependence of the resolvent}
\label{SS:resolvent}

We now consider the question about the dependence on vertex conditions of the resolvent of $\Hamiltonian_{\cA}$. In order to do so, we need to consider
the operator family $\Hamiltonian_{\cA}-i \Id$, where $\Id$ denotes the identity operator in $L^2(\G)$. Sobolev's compactness of embedding theorem shows that $\Id$ is a compact operator from $\widetilde{H}^2(\G)$ to $L_2(\G)$. This and the previous theorem imply that $\Hamiltonian_{\cA}-i \Id$ is also an analytic family of Fredholm operators of zero index.

As analytic Fredholm theorem (see Appendix \ref{A:Fredholm}) shows, the set $\Sigma$ of the matrices $\cA$ for which the operator $\Hamiltonian_{\cA}-i \Id:\Domain_{\cA}\to L^2(\G)$ is not continuously invertible, is principal analytic (i.e., can be given by an equation $\phi(\cA)=0$, where function $\phi$ is analytic). Since this set $\Sigma$ does not include any points $\cA$ of $U_s$ (because in this case the operator $\Hamiltonian_{\cA}$ is self-adjoint), this singular set is nowhere dense.
\begin{theorem}\label{T:A_resolvent}
The resolvent $(\Hamiltonian_{\cA}-i \Id)^{-1}$, defined in $U\setminus \Sigma$, is analytic and has values that are compact operators in $L^2(\G)$.
\end{theorem}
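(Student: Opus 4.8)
The plan is to lift the resolvent through the trivialization $T(\cA)$ of Theorem~\ref{T:domain_A} and then invoke only two soft facts: that inversion is a holomorphic map on the open set of invertible operators between fixed Banach spaces, and that analyticity and compactness of operator-valued functions are preserved under multiplication by a fixed bounded (resp.\ compact) operator. First I would record the elementary observation that for each $\cA\in U$ the trivialization $T(\cA)\colon H\to\widetilde{H}^2(\G)$ is a bounded bijection onto the closed subspace $\Domain_{\cA}$ (injective and with range exactly $\Domain_{\cA}$ by construction), hence has a bounded inverse by the open mapping theorem. Consequently, for $g\in L^2(\G)$ and $\cA\in U\setminus\Sigma$, the unique solution $f\in\Domain_{\cA}$ of $(\Hamiltonian_{\cA}-i\Id)f=g$ is $f=T(\cA)h$, where $h\in H$ is the unique solution of $(\Hamiltonian_{\cA}-i\Id)T(\cA)h=g$; this yields the operator identity
\[
  (\Hamiltonian_{\cA}-i\Id)^{-1}=T(\cA)\,\bigl[(\Hamiltonian_{\cA}-i\Id)T(\cA)\bigr]^{-1}
\]
as bounded operators from $L^2(\G)$ into $\widetilde{H}^2(\G)$, valid on the open set $U\setminus\Sigma$.

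Next I would analyze the right-hand side. By the discussion preceding the theorem, $\cA\mapsto(\Hamiltonian_{\cA}-i\Id)T(\cA)$ is an analytic family of Fredholm operators of index zero from $H$ to $L^2(\G)$, and on $U\setminus\Sigma$ each of its values is invertible: indeed, by the definition of $\Sigma$ the operator $\Hamiltonian_{\cA}-i\Id\colon\Domain_{\cA}\to L^2(\G)$ is continuously invertible there, and $T(\cA)$ is a bounded bijection onto $\Domain_{\cA}$ with bounded inverse, so the composition is a bounded bijection of $H$ onto $L^2(\G)$. Since inversion $X\mapsto X^{-1}$ is holomorphic on the open subset of $L(H,L^2(\G))$ consisting of invertible operators, the function $\cA\mapsto\bigl[(\Hamiltonian_{\cA}-i\Id)T(\cA)\bigr]^{-1}$ is analytic on $U\setminus\Sigma$ with values in $L(L^2(\G),H)$. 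Combining this with the analyticity of $\cA\mapsto T(\cA)$ from Theorem~\ref{T:domain_A}(2), the product in the displayed identity is analytic with values in $L(L^2(\G),\widetilde{H}^2(\G))$; composing on the left with the fixed continuous Sobolev embedding $\iota\colon\widetilde{H}^2(\G)\hookrightarrow L^2(\G)$ preserves analyticity, so $\cA\mapsto(\Hamiltonian_{\cA}-i\Id)^{-1}$ is analytic as an $L(L^2(\G))$-valued function on $U\setminus\Sigma$.

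Finally, for compactness I would use that $\iota$ is a compact operator by the Rellich--Kondrachov compact Sobolev embedding on each edge, as already invoked in the paragraph before the theorem. Writing $(\Hamiltonian_{\cA}-i\Id)^{-1}=\iota\circ T(\cA)\circ\bigl[(\Hamiltonian_{\cA}-i\Id)T(\cA)\bigr]^{-1}$ as an operator in $L^2(\G)$, it is a product of a compact operator with bounded ones, hence compact for every $\cA\in U\setminus\Sigma$. I do not expect any serious obstacle: once Theorem~\ref{T:domain_A} supplies the analytic, globally trivializing $T(\cA)$, the only point needing care is the bookkeeping that makes the operator identity above literally correct---that $T(\cA)$ maps \emph{onto} $\Domain_{\cA}$ with bounded inverse and that $\Hamiltonian_{\cA}$ is exactly the restriction of $-d^2/dx^2$ to $\Domain_{\cA}$---after which everything reduces to the standard stability of analyticity and compactness under composition and the holomorphy of operator inversion.
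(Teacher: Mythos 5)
Your proposal is correct and follows essentially the same route as the paper's proof: pass to the trivialized analytic family $(\Hamiltonian_{\cA}-i\Id)T(\cA)$, use holomorphy of operator inversion on the invertible set to get analyticity of the resolvent as an operator into $\widetilde{H}^2(\G)$, and obtain compactness by factoring through the compact embedding of $\widetilde{H}^2(\G)$ into $L^2(\G)$. You merely spell out the bookkeeping (the identity $(\Hamiltonian_{\cA}-i\Id)^{-1}=T(\cA)\bigl[(\Hamiltonian_{\cA}-i\Id)T(\cA)\bigr]^{-1}$ and the bounded invertibility of $T(\cA)$ onto $\Domain_{\cA}$) that the paper leaves implicit.
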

\begin{proof}
  Analyticity of the inverse to an analytic family of bounded
  invertible operators is well known (e.g.,
  \cite{Kato_perturbation,ZaiKreKuc_umn75}). Thus, $(\Hamiltonian_{\cA}-i
  \Id)^{-1}$ is an analytic family of operators from $L^2(\G)$ to
  $\widetilde{H}^2(\G)$, and thus also as a family of operators acting in
  $L^2(\G)$. However, considered as operators in $L^2$, they
  factor through the compact embedding of $\Domain_{\cA}$ into
  $L^2(\G)$, and thus are compact.
\end{proof}

\subsection{Variations in the edge lengths}
\label{SS:edge_lengths}

Sometimes one needs to consider the quantum graph's dependence on the
variations in the edge lengths parameters $\{L_e\}$ (without changing
graph's topology). We thus extend the previous considerations to
include the dependence on the vector
\begin{equation*}
\Lengths:=\{L_e\}|_{e\in\Edges}\in (\R^+)^E.
\end{equation*}
in fact, we allow ``complex values of lengths'' $\Lengths$, i.e. $\Lengths\in (\CC\setminus\{0\})^E$.

Let $\G$ be a quantum graph with the Hamiltonian $\Hamiltonian=-d^2/dx^2$ equipped with vertex conditions described by a matrix $\cA=\{(A_v\, B_v)|_{v\in\Vertices}\}$. We would like to vary the lengths of the edges (independently from each other), without changing topology or vertex conditions. Let us consider the vector $\xi=\{\xi_e\}\in (\R^+)^E$ of dilation factors along each edge. It is clear that such a dilation is equivalent to keeping the metric graph structure the same, while replacing $\Hamiltonian=-d^2/dx^2$ with $\Hamiltonian_\xi=\{-\xi^{-2}_e d^2/dx^2$ and $(A_v\, B_v)$ with $(A_v\ B_v\Xi_v)$, where $\Xi_v$ is the diagonal $d_v\times d_v$ matrix having $\xi_e$ as its diagonal entries, where $e$ denotes the edges incident to the vertex $v$.

\begin{definition}\label{D:Hamilt_scaled}
For any $\cA$ from the previously described set $U$ of complex matrices $\cA=\{(A_v\, B_v)\}$ satisfying the maximal rank condition and for any vector $\xi\in(\CC\setminus\{0\})^E$, we denote by $\Hamiltonian_{\cA,\xi}$
the Hamiltonian $-\xi_e^{-2}d^2/dx_e^2$ on $\G$ with the vertex conditions provided by the matrices $\cA_\xi:=\{(A_v\, B_v\Xi_v)\}$.
\end{definition}
Notice that this rescaling allows for ``complex lengths'' of the edges. This is useful, when considering analytic properties of the Hamiltonian with respect to the parameters.

The same arguments as in proving Theorem \ref{T:domain_A} lead to the following statement:
 \begin{theorem}\label{T:domain_AXi}
\indent
\begin{enumerate}
\item The bundle $\Domain$ over $U\times(\CC\setminus\{0\})^E$ with the fiber over $(\cA,\xi)$ defined by the vertex condition matrix $\cA_\xi$, is an analytic sub-bundle of co-dimension $2E$ of the ambient trivial bundle \begin{equation*}
    \left(U\times(\CC\setminus\{0\})^E\right)\times \widetilde{H}^2(\G)\mapsto \left(U\times(\CC\setminus\{0\})^E\right).
    \end{equation*}
\item The bundle $\Domain$ is trivializable. In other words, there exists a trivialization, i.e. an analytic operator-function $T(\cA,\xi)$ on $U\times(\CC\setminus\{0\})^E$ with values in linear bounded operators from a Hilbert space $H$ into $\widetilde{H}^2(\G)$, such that $\ker T(\cA,\xi)=0$ and $\mathrm{ran}\, T(\cA,\xi)=\Domain_{\cA_\xi}$ for any $(\cA,\xi)\in U\times(\CC\setminus\{0\})^E$.
\item After the trivialization, the values of the resulting analytic in $U\times(\CC\setminus\{0\})^E$ operator-function
    \begin{equation*}
    \cA\mapsto \Hamiltonian_{\cA,\xi}T(\cA,\xi)
    \end{equation*}
    are Fredholm operators of index zero from $H$ to $L^2(\G)$.
\end{enumerate}
\end{theorem}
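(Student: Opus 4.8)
The plan is to deduce Theorem~\ref{T:domain_AXi} from Theorem~\ref{T:domain_A} via the reduction already recorded in the paragraph preceding Definition~\ref{D:Hamilt_scaled}: dilating the edge lengths by a factor $\xi$ is the same as keeping the metric graph and the ambient space $\widetilde{H}^2(\G)$ fixed, replacing the vertex matrix $\cA=\{(A_v\ B_v)\}$ by $\cA_\xi=\{(A_v\ B_v\Xi_v)\}$, and replacing $-d^2/dx^2$ by $-\xi_e^{-2}d^2/dx_e^2$. So first I would introduce the map $\Phi\colon U\times(\CC\setminus\{0\})^E\to\CC^{2q}$, $(\cA,\xi)\mapsto\cA_\xi$, observe that it is polynomial (hence analytic) in $(\cA,\xi)$, and check that it takes values in $U$: since $\xi_e\ne 0$ for every $e$, each block $\Xi_v$ is invertible, and right-multiplying $B_v$ by $\Xi_v$ does not change the rank of $(A_v\ B_v)$. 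Consequently the bundle $\Domain$ of the present theorem is the pullback $\Phi^{*}\Domain$ of the bundle of Theorem~\ref{T:domain_A}, with fiber $\Domain_{\cA_\xi}$ over $(\cA,\xi)$, and statement~(1) follows because the pullback of an analytic Hilbert sub-bundle of codimension $2E$ under an analytic map is again an analytic sub-bundle of the same codimension: the local analytic projectors $P(\cA)$ built in the proof of Theorem~\ref{T:domain_A} pull back to local analytic projectors $P(\Phi(\cA,\xi))$ onto $\Domain_{\cA_\xi}$ (cf. \cite{ZaiKreKuc_umn75}). Alternatively, I would simply repeat that local argument verbatim, using Lemma~\ref{L:T_A} to get surjectivity of $\cT_{\cA_\xi}$ and hence a continuous right inverse near any fixed $(\cA_0,\xi_0)$.

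For statement~(2) the only thing to add to the proof of Theorem~\ref{T:domain_A}(2) is that the enlarged base $U\times(\CC\setminus\{0\})^E$ is again holomorphically convex. I would argue that $U$ is a domain of holomorphy by Lemma~\ref{L:A_regular}, that $\CC\setminus\{0\}$ is a domain of holomorphy (every domain in $\CC$ is), hence so is the finite product $(\CC\setminus\{0\})^E$, and that a finite product of domains of holomorphy is a domain of holomorphy. With holomorphic convexity of the base in hand, Kuiper's theorem \cite{Kui_t65} gives topological triviality of the infinite-dimensional analytic Hilbert bundle $\Domain$, and Bungart's theorem \cite{Bun_t67} upgrades this to an analytic trivialization $T(\cA,\xi)$ with $\ker T(\cA,\xi)=0$ and $\mathrm{ran}\,T(\cA,\xi)=\Domain_{\cA_\xi}$, exactly as in Theorem~\ref{T:domain_A}.

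For statement~(3) I would handle the new feature --- that the leading operator $\Hamiltonian_{\cA,\xi}=-\xi_e^{-2}d^2/dx_e^2$ now depends on $\xi$ --- by factoring it as $\Hamiltonian_{\cA,\xi}=M_\xi\circ D$, where $D=-d^2/dx^2\colon\widetilde{H}^2(\G)\to L^2(\G)$ is the fixed, vertex-condition-free operator and $M_\xi$ is multiplication on $L^2(\G)$ by the function equal to the constant $\xi_e^{-2}$ on the edge $e$. The map $\xi\mapsto M_\xi$ is analytic with values in bounded operators on $L^2(\G)$, and for every $\xi\in(\CC\setminus\{0\})^E$ the operator $M_\xi$ is boundedly invertible (its inverse is multiplication by $\xi_e^{2}$). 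Hence $(\cA,\xi)\mapsto\Hamiltonian_{\cA,\xi}T(\cA,\xi)=M_\xi\,D\,T(\cA,\xi)$ is a product of analytic operator-functions, so it is analytic; and, $M_\xi$ being invertible, it is Fredholm of the same index as $D\,T(\cA,\xi)$, which is --- up to an invertible analytic change of trivialization, and after precomposition with the analytic map $\Phi$ --- the operator-function of Theorem~\ref{T:domain_A}(3), already known to be Fredholm of index zero via \cite[Theorem~14]{FulKucWil_jpa07}.

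Most of this is a routine transcription of Section~\ref{sec:dependence_H}; the one point that is not purely formal is precisely the $\xi$-dependence of the differential operator in part~(3), and the main obstacle is to package it cleanly --- the factorization $\Hamiltonian_{\cA,\xi}=M_\xi\circ D$ through an invertible multiplication operator is what makes both the analyticity and the invariance of the Fredholm index transparent. I would also note that the ``complex lengths'' $\xi_e\in\CC\setminus\{0\}$ are genuinely harmless here: the metric-graph and $L^2$ structures are never rescaled, only the coefficients of the operator and the entries of $\cA$ are, so no positivity or self-adjointness enters the argument at any point.
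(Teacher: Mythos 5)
Your proposal is correct and follows essentially the same route as the paper, which proves Theorem~\ref{T:domain_AXi} simply by asserting that ``the same arguments as in proving Theorem~\ref{T:domain_A}'' apply: your pullback via the analytic map $(\cA,\xi)\mapsto\cA_\xi$ (which lands in $U$ since $\Xi_v$ is invertible), the holomorphic convexity of the product base feeding into Kuiper's and Bungart's theorems, and the factorization $\Hamiltonian_{\cA,\xi}=M_\xi\circ(-d^2/dx^2)$ through an invertible multiplication operator are exactly the details that implementation requires, with the index-zero claim again resting on \cite[Theorem~14]{FulKucWil_jpa07}.
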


\subsection{Dependence of the spectrum on the vertex conditions}
\label{S:spect_perturb}

In this section, we will derive some basic facts about relations
between the quantum graph eigenvalues and continuous graph parameters
(such as matrices $\cA=\{(A_v\, b_v)\}|_{v\in\Vertices}$ of vertex
conditions and edges' lengths $\{L_e\}|_{e\in\Edges}$. We
denote by $\Hamiltonian_\cA$ the operator $-d^2/dx^2$ on a compact
graph $\G$ with the domain described by the vertex conditions
(\ref{E:cond_kossch}) that correspond to the matrix $\cA=\{(A_v\,
b_v)\}|_{v\in\Vertices}\subset U$. Here $U\subset \CC^{2q}$ (with
$q=\sum d_v^2$) consists of such complex $\cA$ that the rank of the
$d_v\times 2d_v$-matrix $(A_v\, B_v)$ is maximal for any vertex
$v$. We will also use notation $\Hamiltonian_{\cA,\xi}$ for the
re-scaled version of $\Hamiltonian_{\cA}$ that acts as
$-\xi^{-2}d^2/dx^2_e$ on the domain described by $\cA_\xi=(A_v\,
B_v\Xi_v)$. Here $\xi=\{\xi_e\}\in(\CC\setminus\{0\})^E$ is
the vector of (non-zero) scaling factors applied on each edge.

We are interested in the dependence of the spectrum of
$\Hamiltonian_{\cA,\xi}$ on the parameters $(\cA,\xi)\subset U\times
(\CC\setminus \{0\})^E$. Thus, we consider the operator pencil
$\Hamiltonian_{\cA,\xi}-\lambda\Id$ of unbounded operators in
$L^2(\G)$.

The following result, which addresses analytic behavior of the
spectrum, follows from Theorem \ref{T:domain_AXi} and the analytic
Fredholm theorem (Appendix \ref{A:Fredholm}) and \cite[Theorem 4.11]{ZaiKreKuc_umn75}.

\begin{theorem}\label{T:analit}\indent
  \begin{enumerate}
  \item The set $\mathcal{S}$ of all vectors 
    \begin{equation*}
      (\cA,\xi,\lambda)\subset U\times(\CC\setminus\{0\})^E\times\CC
    \end{equation*}
    such that $\Hamiltonian_{\cA,\xi}-\lambda\Id$ does not have a
    bounded inverse in $L^2(\G)$, is principal analytic. Namely, there
    exists a non-zero function $\Phi(\cA,\xi,\lambda)$ analytic in
    $U\times(\CC\setminus\{0\})^E\times\CC$, such that $\mathcal{S}$
    coincides with the set of its zeros.

  \item For any integer $k\geq 1$, the set  $\mathcal{S}_k$ of all vectors
    \begin{equation*}
      (\cA,\xi,\lambda)\subset U\times(\CC\setminus\{0\})^E\times\CC
    \end{equation*}
    such that $\dim \mbox{Ker}
    \left(\Hamiltonian_{\cA,\xi}-\lambda\Id\right) \geq k$, is
    analytic.
  \end{enumerate}
\end{theorem}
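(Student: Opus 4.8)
The plan is to reduce everything to a statement about an analytic family of Fredholm operators of index zero and then invoke the analytic Fredholm theorem in the form cited in the Appendix and in \cite{ZaiKreKuc_umn75}. By Theorem~\ref{T:domain_AXi}, after the analytic trivialization $T(\cA,\xi)$ of the domain bundle $\Domain$ over $U\times(\CC\setminus\{0\})^E$, the operator-function $(\cA,\xi)\mapsto \Hamiltonian_{\cA,\xi}T(\cA,\xi)$ is analytic with values in the Fredholm operators of index zero from $H$ to $L^2(\G)$. I would then add the spectral parameter: the map
\begin{equation*}
  G(\cA,\xi,\lambda) := \bigl(\Hamiltonian_{\cA,\xi}-\lambda\Id\bigr)T(\cA,\xi)
  = \Hamiltonian_{\cA,\xi}T(\cA,\xi) - \lambda\, T(\cA,\xi)
\end{equation*}
is analytic in $(\cA,\xi,\lambda)\in U\times(\CC\setminus\{0\})^E\times\CC$, because it is the difference of the analytic operator-function from Theorem~\ref{T:domain_AXi} and $\lambda$ times the analytic operator-function $T(\cA,\xi)$ (now viewed with values in $L(H,L^2(\G))$ via the compact embedding $\widetilde H^2(\G)\hookrightarrow L^2(\G)$). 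Since $\lambda T(\cA,\xi)$ has values in compact operators (it factors through the Sobolev compact embedding, exactly as in the proof of Theorem~\ref{T:A_resolvent}), the family $G(\cA,\xi,\lambda)$ remains Fredholm of index zero throughout.

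Next I would observe that $\Hamiltonian_{\cA,\xi}-\lambda\Id$ fails to have a bounded inverse in $L^2(\G)$ precisely when $G(\cA,\xi,\lambda)$ fails to be invertible as an operator from $H$ to $L^2(\G)$: indeed $T(\cA,\xi)$ is an isomorphism of $H$ onto $\Domain_{\cA_\xi}$ with trivial kernel, so $\Hamiltonian_{\cA,\xi}-\lambda\Id$ is invertible on its domain iff $G(\cA,\xi,\lambda)=(\Hamiltonian_{\cA,\xi}-\lambda\Id)T(\cA,\xi)$ is a bijection of $H$ onto $L^2(\G)$, which for a Fredholm operator of index zero is equivalent to injectivity, i.e. to $\ker G(\cA,\xi,\lambda)=0$, and also to $\operatorname{coker} G(\cA,\xi,\lambda)=0$. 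Moreover $\dim\ker(\Hamiltonian_{\cA,\xi}-\lambda\Id)=\dim\ker G(\cA,\xi,\lambda)$ for every $(\cA,\xi,\lambda)$. So both $\mathcal S$ and $\mathcal S_k$ are determined by the kernel dimension of the analytic index-zero Fredholm family $G$.

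Now I would apply the analytic Fredholm theorem (Appendix~\ref{A:Fredholm}, cf. \cite[Thm.~4.11]{ZaiKreKuc_umn75}) to $G$ over the connected base $U\times(\CC\setminus\{0\})^E\times\CC$ --- connectedness of $U$ follows from Lemma~\ref{L:A_regular}, since $U$ is the complement of an algebraic set and is nonempty, hence connected. The theorem requires $G$ to be invertible at \emph{at least one} point of the base; this is supplied by choosing any $\cA_0\in U_s$, any $\xi$, and any $\lambda$ with nonzero imaginary part, so that $\Hamiltonian_{\cA_0,\xi}$ is self-adjoint and $\lambda\notin\R$, giving invertibility of $\Hamiltonian_{\cA_0,\xi}-\lambda\Id$ (this is the same remark made just before Theorem~\ref{T:A_resolvent}). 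The analytic Fredholm theorem then yields a single scalar analytic function $\Phi(\cA,\xi,\lambda)$, not identically zero, whose zero set is exactly $\mathcal S$ --- this is statement (1). For statement (2), the stratification part of the analytic Fredholm theorem gives that each set $\{\dim\ker G\ge k\}$ is analytic (it is the common zero set of the appropriate minors of an analytic matrix-valued resolvent-type expression built from $G$ near each point), which is exactly $\mathcal S_k$.

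The \textbf{main obstacle} I anticipate is purely bookkeeping: one must make sure the three parameter groups are handled simultaneously and that the perturbation by $\lambda$ genuinely preserves the Fredholm index. The index-zero claim is robust because $\lambda T(\cA,\xi)$ is compact (a finite-rank-approximable operator through the Sobolev embedding), so Fredholmness and index are unaffected; this is essentially the argument already used for Theorem~\ref{T:A_resolvent} and requires no new idea. The only other point needing a word is that the conclusion of the analytic Fredholm theorem is a local-to-global statement requiring connectedness of the base and a single point of invertibility; both have been verified above, so the theorem applies verbatim and both parts of Theorem~\ref{T:analit} follow.
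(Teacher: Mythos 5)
Your proposal is correct and is essentially the paper's own argument: the paper proves Theorem~\ref{T:analit} by exactly this reduction to Theorem~\ref{T:domain_AXi} together with the analytic Fredholm theorem (Appendix~\ref{A:Fredholm}) and \cite[Theorem 4.11]{ZaiKreKuc_umn75}, and you have merely filled in the same steps (trivialize, note that $\lambda T(\cA,\xi)$ is compact via the Sobolev embedding so the pencil stays Fredholm of index zero, identify the kernels, and exhibit one point of invertibility). Two cosmetic corrections: self-adjointness of $\Hamiltonian_{\cA_0,\xi}$ requires $\xi\in(\R\setminus\{0\})^E$, not arbitrary complex $\xi$ (take e.g.\ $\xi=(1,\dots,1)$), and the hypothesis of the appendix theorem to check is holomorphic convexity of the base rather than mere connectedness --- this holds because $U$ is a domain of holomorphy by Lemma~\ref{L:A_regular} and $(\CC\setminus\{0\})^E\times\CC$ is Stein.
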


\begin{remark}
  \indent
  \begin{itemize}
  \item The set $\mathcal{S}$ is the graph of the multiple-valued function
    \begin{equation*}
      (\cA,\xi) \to \sigma(\Hamiltonian_{\cA,\xi}),
    \end{equation*}
    and can be considered as a kind of ``dispersion relation.'' Here
    $\sigma(\Hamiltonian)$ denotes the spectrum of the operator
    $\Hamiltonian$.
  \item The sets $\mathcal{S}_k$ are clearly nested:
    $\mathcal{S}_{k+1}\subset \mathcal{S}_k$. Moreover,
    $\mathcal{S}_1=\mathcal{S}$.
  \item For $\cA\in U_s$ (i.e., satisfying the condition guaranteeing
    self-adjointness of $\Hamiltonian_\cA$) and $\xi\in (\R\setminus
    \{0\})^E$, the operator $\Hamiltonian_{\cA,\xi}$ is selfadjoint
    and its spectrum is real and discrete.
  \end{itemize}
\end{remark}

\subsection{Eigenfunction dependence}

We present here, for completeness, a simple and well known consequence
of perturbation theory.
\begin{theorem}
  The kernel
  $\mbox{Ker}\left(\Hamiltonian_{\cA,\xi}-\lambda\Id\right)$ forms a
  holomorphic $k$-dimensional vector bundle over the set
  $\mathcal{S}_k\setminus \mathcal{S}_{k+1}$. In other words, if one
  has an local analytic eigenvalue branch $\lambda(\cA,\xi)$ of
  constant multiplicity $k$, then, at least locally, one can choose an
  analytic basis of $k$ eigenfunctions.
\end{theorem}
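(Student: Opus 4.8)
The plan is to reduce the statement, via the analytic trivialization $T(\cA,\xi)$ furnished by Theorem~\ref{T:domain_AXi}, to a standard fact about holomorphic families of Fredholm operators of index zero, and then to localize using a Lyapunov--Schmidt (Schur complement) reduction to a finite $k\times k$ holomorphic matrix family. First I would package the data: write $p:=(\cA,\xi,\lambda)\in U\times(\CC\setminus\{0\})^E\times\CC$ and set
\[
  \mathcal{F}(p):=\bigl(\Hamiltonian_{\cA,\xi}-\lambda\Id\bigr)\,T(\cA,\xi)\colon H\to L^2(\G).
\]
By Theorem~\ref{T:domain_AXi}(3) together with compactness of the embedding $\widetilde{H}^2(\G)\hookrightarrow L^2(\G)$, $\mathcal{F}$ is an analytic family of Fredholm operators of index $0$. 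Since $T(\cA,\xi)$ is injective with range $\Domain_{\cA_\xi}$, the map $h\mapsto T(\cA,\xi)h$ is an isomorphism $\ker\mathcal{F}(p)\to\ker(\Hamiltonian_{\cA,\xi}-\lambda\Id)$ depending analytically on $p$; hence it suffices to prove that $p\mapsto\ker\mathcal{F}(p)$ is a holomorphic $k$-dimensional subbundle of the trivial bundle with fiber $H$ over $\mathcal{S}_k\setminus\mathcal{S}_{k+1}$, and then transport it by $T$.

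Next, I would fix a point $p_0\in\mathcal{S}_k\setminus\mathcal{S}_{k+1}$, put $N_0:=\ker\mathcal{F}(p_0)$ (dimension $k$) and $R_0:=\operatorname{ran}\mathcal{F}(p_0)$ (closed, of codimension $k$ since the index is $0$), and choose closed complements $H=N_0\oplus M_0$ and $L^2(\G)=R_0\oplus C_0$ with $\dim C_0=k$. In the associated $2\times2$ block form $\mathcal{F}(p)$ has entries $a(p)\colon N_0\to R_0$, $b(p)\colon M_0\to R_0$, $c(p)\colon N_0\to C_0$, $d(p)\colon M_0\to C_0$, all analytic in $p$, with $a(p_0)=c(p_0)=d(p_0)=0$ and $b(p_0)$ an isomorphism; hence $b(p)$ is invertible for $p$ near $p_0$. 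Eliminating the $M_0$-component of the kernel equation yields the Schur complement $\kappa(p):=c(p)-d(p)b(p)^{-1}a(p)\colon N_0\to C_0$, an analytic family of $k\times k$ matrices, and
\[
  \ker\mathcal{F}(p)=\bigl\{\,u-b(p)^{-1}a(p)u:\ u\in\ker\kappa(p)\,\bigr\},\qquad
  \dim\ker\mathcal{F}(p)=\dim\ker\kappa(p).
\]

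Finally I would exploit that $\kappa(p)$ is a $k\times k$ matrix: its kernel has dimension at most $k$, so near $p_0$ one has $\dim\ker\mathcal{F}(p)\le k$, whence $\mathcal{S}_{k+1}$ does not meet a neighborhood of $p_0$; and the defining inequality $\dim\ker\mathcal{F}(p)\ge k$ of $\mathcal{S}_k$ forces $\ker\kappa(p)=N_0$, i.e.\ $\kappa(p)=0$. Thus, locally, $\mathcal{S}_k\setminus\mathcal{S}_{k+1}=\mathcal{S}_k=\{\kappa(p)=0\}$, and over this set $\ker\mathcal{F}(p)$ is exactly the graph of the analytic family $-b(p)^{-1}a(p)\colon N_0\to M_0$. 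Consequently $u\mapsto u-b(p)^{-1}a(p)u$ is an analytic-in-$p$ isomorphism of the \emph{fixed} space $N_0$ onto $\ker\mathcal{F}(p)$, i.e.\ a local holomorphic frame, which exhibits $p\mapsto\ker\mathcal{F}(p)$ as a holomorphic $k$-dimensional subbundle over $\mathcal{S}_k\setminus\mathcal{S}_{k+1}$; applying $T(\cA,\xi)$ then gives the holomorphic bundle of eigenfunctions, and a local basis of sections provides the promised analytic basis of $k$ eigenfunctions along any local branch $\lambda(\cA,\xi)$ of constant multiplicity. The only step needing genuine care is the block decomposition and the observation that on $\mathcal{S}_k\setminus\mathcal{S}_{k+1}$ the entire rank defect is carried by the finite Schur block $\kappa$; the remaining ingredients are the standard analytic-Fredholm/perturbation-theory machinery (see, e.g., \cite{ZaiKreKuc_umn75} and the analytic Fredholm theorem of Appendix~\ref{A:Fredholm}), which is why the result is "well known."
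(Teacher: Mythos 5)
Your proposal is correct, but it proves the statement by a genuinely different route than the paper. The paper's argument is a one-line appeal to standard perturbation theory: along a branch of constant multiplicity $k$ the Riesz spectral projector $P(\cA,\xi)=\frac{1}{2\pi\rmi}\oint(\Hamiltonian_{\cA,\xi}-z\Id)^{-1}dz$ (contour enclosing only the branch $\lambda(\cA,\xi)$) is analytic because the resolvent is, it has constant rank $k$, and applying it to a fixed basis of the eigenspace at a base point yields a local analytic frame. You instead work directly with the trivialized Fredholm pencil $\mathcal{F}(p)=(\Hamiltonian_{\cA,\xi}-\lambda\Id)T(\cA,\xi)$ from Theorem~\ref{T:domain_AXi} and carry out a Lyapunov--Schmidt/Schur-complement reduction: all the block-decomposition steps check out ($a(p_0)=c(p_0)=d(p_0)=0$, $b(p_0)$ invertible, $\dim\ker\mathcal{F}(p)=\dim\ker\kappa(p)$, and the identification of $\ker\mathcal{F}(p)$ with the graph of $-b(p)^{-1}a(p)$ over $\ker\kappa(p)$), and transporting the resulting frame by the injective analytic trivialization $T(\cA,\xi)$ correctly produces the analytic basis of eigenfunctions. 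Your approach buys more than the paper's: it avoids resolvent analyticity and contour integration altogether, it shows that $\mathcal{S}_{k+1}$ is disjoint from a neighborhood of any point of $\mathcal{S}_k\setminus\mathcal{S}_{k+1}$, and it exhibits $\mathcal{S}_k$ locally as the zero set of the $k\times k$ analytic matrix function $\kappa$, which refines the local picture behind Theorem~\ref{T:analit}; the price is length, whereas the paper's projector argument is immediate once analyticity of the resolvent in all parameters (the $(\cA,\xi)$-analogue of Theorem~\ref{T:A_resolvent}) is granted. The only point worth stating explicitly if you write this up is that on overlaps of two such local frames the transition matrices are analytic (they are, being given by coordinates of one analytic frame in another), which is what makes the kernels into a holomorphic bundle rather than merely locally framed families.
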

\begin{proof} 
  The spectral projector onto the corresponding spectral subspace is
  clearly analytic with respect to the parameter.
\end{proof}

\section{An Hadamard type formula}
\label{sec:Hadamard}

Hadamard's variational formulas
\cite{Had_mas08,GarSch_jam53,IvaKotKre_dup77,Pee_jde80} deal with the
variation of the spectral data with respect to the domain
perturbation.  For simplicity, we will consider the case of variation
of the eigenvalues with respect to a change in a ``loose'' edge's
length.  Namely, the end of the edge is assumed to be a vertex of
degree $1$ and the length of the edge will be denoted by $s$.  We will
also use $s$ to denote the vertex of degree 1.  The proof follows the
well established pattern (see, e.g. \cite{Gri_jota10}) and can be
easily generalized.

\begin{proposition}
  \label{prop:Hadamard_formula}
  Let $\lambda = \lambda(s)$ be a simple eigenvalue of a graph
  $\Gamma$ with a loose edge of length $s$ and the Dirichlet condition
  imposed at the end-vertex $s$.  Let $f=f_s(x)$ be the corresponding
  normalized eigenfunction.  Then
  \begin{equation}
    \label{eq:Hadamard_formula}
    \frac{d \lambda}{ds} = - \left|f'(s)\right|^2,
  \end{equation}
  where $f'(s)$ is the value of the derivative of the eigenfunction at
  the end-vertex.
\end{proposition}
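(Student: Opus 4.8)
The plan is to pursue the standard ``differentiate the eigenvalue equation and integrate against the eigenfunction'' argument, but to handle the moving domain by first rescaling the loose edge to a fixed reference interval. Parametrize the loose edge by $x\in[0,s]$ with $x=0$ the interior vertex and $x=s$ the degree-one vertex where $f_s(s)=0$. Substituting $x = s\,t$ with $t\in[0,1]$ turns the problem into one on a fixed graph, where the operator on the (now unit-length) formerly-loose edge becomes $-s^{-2}d^2/dt^2$ and the vertex conditions at the other end are modified by the diagonal scaling as in Definition~\ref{D:Hamilt_scaled}; this is precisely the setup $\Hamiltonian_{\cA,\xi}$ with $\xi_e = s$ on the loose edge and $\xi_e=1$ elsewhere. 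Since $\lambda(s)$ is simple, Theorem~\ref{T:analit} together with the eigenfunction bundle statement in Section~\ref{S:spect_perturb} guarantees that $\lambda(s)$ and a choice of normalized eigenfunction $f_s$ depend analytically (in particular differentiably) on $s$, so all the derivatives written below exist and the interchange of $d/ds$ with edge integrals is legitimate.

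Next I would write down the identity obtained by pairing the eigenvalue equation with $\dot f := \partial_s f_s$ and vice versa. Concretely, from $-f_s'' = \lambda f_s$ on each edge (prime $={}$ derivative in the original $x$ variable) and $-\dot f'' = \dot\lambda f_s + \lambda \dot f$, multiply the first by $\overline{\dot f}$, the second's conjugate-appropriate companion by $f_s$, subtract, and integrate over all edges. All the bulk terms combine into boundary terms via integration by parts (Green's identity $\int_e (f'' g - f g'') = [f'g - fg']_{\partial e}$), and the boundary terms at every vertex other than $s$ cancel because both $f_s$ and the relevant variation satisfy the same fixed self-adjoint vertex conditions there (this is the usual computation that makes $\langle \Hamiltonian u, v\rangle = \langle u, \Hamiltonian v\rangle$). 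What survives is $\dot\lambda \int_\Gamma |f_s|^2\,dx = \dot\lambda$ on one side (using the normalization $\|f_s\|=1$) and a single boundary contribution at the moving endpoint $x=s$ on the other. The subtlety is that the endpoint of integration is itself moving, so the derivative of $\int_0^s |f_s|^2 dx$ contributes a term $|f_s(s)|^2 = 0$ (vanishing by the Dirichlet condition) — this is exactly where the Dirichlet hypothesis is used — and the surviving term is evaluated from the transport identity for the Dirichlet datum.

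The one genuine point requiring care — the main obstacle — is correctly extracting the boundary term at the moving vertex $s$, because $\dot f(s)$ is not zero even though $f_s(s)=0$: differentiating the identity $f_s(s)=0$ in $s$ gives $\dot f(s) + f_s'(s) = 0$, i.e. $\dot f(s) = -f_s'(s)$. Feeding this into the surviving boundary term $\big[\overline{f_s'}\,\dot f - \overline{f_s}\,\dot f'\big]$ evaluated at $x=s$ (the $\overline{f_s}$ term drops since $f_s(s)=0$) yields $-|f_s'(s)|^2$, and after accounting for the sign conventions (derivatives taken into the edge, i.e.\ away from the vertex $s$, so an orientation sign appears) one obtains $\dot\lambda = -|f_s'(s)|^2$, which is \eqref{eq:Hadamard_formula}. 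I would present the computation cleanly by first reducing to the rescaled fixed graph so that no ``moving boundary'' bookkeeping is needed at all: on $[0,1]$ with operator $-s^{-2}\partial_t^2$, differentiating in $s$ produces an explicit $\partial_s(s^{-2}) = -2s^{-3}$ term in the bulk, and the vertex-condition scaling contributes the rest; reconciling the two bookkeeping schemes is routine but is where an careless sign would enter, so I would do it in the rescaled picture and only translate back at the end.
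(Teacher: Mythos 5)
Your argument is correct, but it takes a genuinely different route from the paper. The paper works at the level of the quadratic form: it writes $\lambda = h[f_s,f_s]$, differentiates in $s$ (which is where it needs the derivative of the normalization together with the moving-endpoint correction $|f_s(s)|^2=0$), and then evaluates $2\Real h\left[f,\partial f/\partial s\right]$ by integration by parts, so the answer comes from the single boundary term $f'(s)\overline{\partial_s f}(s)$ combined with the transport identity $\partial_s f(s)+f'(s)=0$ (the paper's \eqref{eq:two_derivatives}). You instead differentiate the eigenvalue equation itself and pair it with $f$ via Green's identity: the Wronskian-type boundary terms cancel at every vertex carrying fixed self-adjoint conditions (since $\partial_s f$ inherits those conditions), and the only surviving contribution, $\overline{f'(s)}\,\partial_s f(s)=-|f'(s)|^2$, is produced by the same transport identity; your preliminary rescaling to the fixed graph via $\Hamiltonian_{\cA,\xi}$ with $\xi_e=s$ on the loose edge plays the role of the paper's appeal to the preceding analyticity theorems. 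Your route actually buys a simplification you did not exploit: in the Green's-identity bookkeeping the two terms proportional to $\lambda$, namely $\int \partial_s f\,\bar f$ and $\int \bar f\,\partial_s f$, are literally equal and cancel, so you only need $\|f_s\|=1$ at the given $s$ and never the derivative of the normalization --- hence the moving-endpoint formula for $\frac{d}{ds}\int_0^s|f_s|^2\,dx$ that you invoke is superfluous in your scheme (it belongs to the paper's quadratic-form scheme, where it is essential). Conversely, the paper's quadratic-form route transfers verbatim to the variation of the $\delta$-coupling parameter in Proposition~\ref{prop:delta_deriv}, since the form displays the $\alpha$-dependence explicitly. Two minor points: the ``orientation sign'' you flag is immaterial because only $|f'(s)|^2$ enters; and the potential $V$ should be kept in the eigenvalue equation, but its contributions cancel in Green's identity, so nothing changes.
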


\begin{proof}
  First we remark that by preceding theorems we can differentiate both the eigenvalue and
  the eigenfunction.  We will omit the subscript $s$ of $f$ unless we
  want to highlight the dependence of the eigenfunction on $s$.

  The Dirichlet condition at the vertex $s$ has the form
  \begin{equation*}
    f_s(s) = 0.
  \end{equation*}
  Differentiating it with respect to $s$ we get
  \begin{equation}
    \label{eq:two_derivatives}
    \frac{\partial f}{\partial s} + f'(s) = 0.
  \end{equation}
  On the other hand, the eigenfunction is $L_2$-normalized.
  Differentiating the normalization condition $(f,f)=0$ we get
  \begin{equation*}
    2 \Real \left( f, \frac{\partial f}{\partial s} \right) = 0,
  \end{equation*}
  where we used the fact that the loose edge's contribution to the
  derivative is
  \begin{equation}
    \label{eq:f_df_orthogonal}
    \frac{\partial}{\partial s} \int_0^s |f_s(x)|^2 dx 
    = \left|f(s)\right|^2 + 2\Real \int_0^s f \overline{\frac{\partial f}{\partial
        s}} dx
  \end{equation}
  and $f_s(s)=0$.
  Finally, representing the eigenvalue as $\lambda = h[f,
  f]$, where $h$ is the quadratic form~\eqref{E:qform}, we get
  \begin{equation}
    \label{eq:d_lambda}
    \frac{d \lambda}{ds} = \left|f'(s)\right|^2
    + V(s) \left|f(s)\right|^2
    + 2\Real h\left[f, \frac{\partial f}{\partial s}\right].
  \end{equation}
  To evaluate the sesqui-linear form $h$ we note that the derivative
  $\partial f / \partial s$ satisfies the same vertex conditions as
  $f$ everywhere apart from the point $s$.  Integrating by parts we get
  \begin{equation*}
    h\left[f, \frac{\partial f}{\partial s}\right] 
    = \left(\Hamiltonian f, \frac{\partial f}{\partial s}\right) +
    f'(s)  \overline{\frac{\partial f}{\partial s}}(s).
  \end{equation*}
  Now we use that $\Hamiltonian f = \lambda f$ and
  equations~\eqref{eq:two_derivatives} and \eqref{eq:f_df_orthogonal}
  to get
  \begin{equation*}
    2 \Real h\left[f, \frac{\partial f}{\partial s}\right] 
    = 2\lambda \Real \left(f, \frac{\partial f}{\partial s}\right) - 
    2|f'(s)|^2 = -2|f'(s)|^2.
  \end{equation*}
  Substituting the last equation and the Dirichlet condition into
  \eqref{eq:d_lambda} yields the desired result.
\end{proof}

One can obtain similar results when varying vertex conditions rather
than the length.  For simplicity we only consider the $\delta$-type
vertex conditions.

\begin{proposition}
  \label{prop:delta_deriv}
  Let $\lambda=\lambda(\alpha)$ be a simple eigenvalue of a graph
  $\Gamma$ which satisfies $\delta$-type vertex condition at $v$ with
  the parameter $\alpha\neq\infty$.  Then
  \begin{equation}
    \label{eq:alpha_deriv}
    \frac{d\lambda}{d\alpha} =  |f(v)|^2.
  \end{equation}
  If we re-parameterize the conditions at $v$ as
  \begin{equation}
    \label{eq:delta_zeta}
    \zeta \sum\limits_{e \in \Edges_{v}} \frac{df}{dx_e}(v)= -f(v),
  \end{equation}
  now allowing Dirichlet ($\zeta=0$) and excluding Neumann
  ($\zeta=\infty$) conditions, the derivative is
  \begin{equation}
    \label{eq:zeta_deriv}
    \frac{d\lambda}{d\zeta} =  \left| \sum\limits_{e \in \Edges_{v}}
      \frac{df}{dx_e}(v) \right|^2.
  \end{equation}
\end{proposition}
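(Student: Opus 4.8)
The plan is to follow the template of Proposition~\ref{prop:Hadamard_formula}: express $\lambda$ through the form of $\Hamiltonian$ evaluated on the normalized eigenfunction, differentiate in the relevant parameter, and use that $f$ is an eigenfunction together with the normalization to discard the cross terms containing the derivative of $f$. Differentiability of $\lambda(\alpha)$, $\lambda(\zeta)$ and of the corresponding analytic eigenfunction branches $f_\alpha$, $f_\zeta$ is guaranteed by the results of Section~\ref{S:spect_perturb}: both the $\delta$-type conditions \eqref{E:delta_cond} with finite $\alpha_v$ and the conditions \eqref{eq:delta_zeta} can be cast in the form \eqref{E:cond_kossch} with matrices $(A_v\,B_v)$ that depend affinely (hence analytically) on the parameter, lie in $U_s$ for real values of it, and satisfy the maximal-rank condition \eqref{E:condit_rank} for \emph{every} value of the parameter, including $\zeta=0$.

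\textbf{The $\alpha$-derivative.} Here one argues through the quadratic form, which is convenient because for finite $\alpha$ the form domain does \emph{not} move with $\alpha$: by Theorem~\ref{T:qform} and \eqref{E:quadr} the parameter $\alpha=\alpha_v$ enters $h=h_\alpha$ only through the additive term $\alpha|f(v)|^2$. Writing $\lambda(\alpha)=h_\alpha[f_\alpha,f_\alpha]$ with $f_\alpha$ normalized and differentiating,
\begin{equation*}
  \frac{d\lambda}{d\alpha}
  =\left(\partial_\alpha h\right)[f,f]+2\Real h_\alpha\!\left[\frac{\partial f}{\partial\alpha},f\right]
  =|f(v)|^2+2\Real h_\alpha\!\left[\frac{\partial f}{\partial\alpha},f\right].
\end{equation*}
Since $f$ is continuous at $v$, so is $\partial f/\partial\alpha$, hence $\partial f/\partial\alpha\in D(h)$; then $h_\alpha[\partial f/\partial\alpha,f]=\lambda(\partial f/\partial\alpha,f)$ because $\Hamiltonian f=\lambda f$, while differentiating $(f,f)=1$ gives $2\Real(\partial f/\partial\alpha,f)=0$. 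The cross term drops out, leaving \eqref{eq:alpha_deriv}.

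\textbf{The $\zeta$-derivative.} The form route is no longer uniform here, since at the Dirichlet point $\zeta=0$ the form domain jumps and $\alpha|f(v)|^2=-|f(v)|^2/\zeta$ blows up; instead I would mimic the second half of the proof of Proposition~\ref{prop:Hadamard_formula} and use Green's identity. Put $S:=\sum_{e\in\Edges_{v}}\frac{df}{dx_e}(v)$. From $\lambda=(\Hamiltonian f,f)$ and differentiation, $d\lambda/d\zeta=(\Hamiltonian\,\partial_\zeta f,f)+\lambda(f,\partial_\zeta f)$. The function $\partial_\zeta f$ satisfies the ($\zeta$-independent) vertex conditions at every $w\neq v$ and only the differentiated form of \eqref{eq:delta_zeta} at $v$, so Green's formula (with the vertex derivatives oriented into the edges, as fixed after \eqref{E:F-column}) gives
\begin{equation*}
  (\Hamiltonian\,\partial_\zeta f,f)-\lambda(\partial_\zeta f,f)
  =\overline{f(v)}\,\partial_\zeta S-(\partial_\zeta f)(v)\,\overline{S},
\end{equation*}
all vertices $w\neq v$ contributing zero because there $f$ and $\partial_\zeta f$ obey the \emph{same} self-adjoint conditions. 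Combining the two displays, the terms $\lambda(\partial_\zeta f,f)+\lambda(f,\partial_\zeta f)=2\lambda\Real(\partial_\zeta f,f)$ vanish by normalization. Differentiating the constraint $\zeta S=-f(v)$ yields $(\partial_\zeta f)(v)=-S-\zeta\,\partial_\zeta S$, and substituting (using $\overline{f(v)}=-\zeta\overline{S}$ for real $\zeta$) reduces the surviving term to $-\zeta\overline{S}\,\partial_\zeta S+|S|^2+\zeta\,\overline{S}\,\partial_\zeta S=|S|^2$, the $\partial_\zeta S$-terms cancelling; this is \eqref{eq:zeta_deriv}. (For $\zeta\neq0$ the same value follows from \eqref{eq:alpha_deriv} and the chain rule under $\alpha=-1/\zeta$; the virtue of the Green's-identity argument is that it also covers $\zeta=0$.)

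The step requiring the most care is the $\zeta$-computation around $\zeta=0$: one must know that $\partial_\zeta f$ is a genuine $\widetilde H^2(\G)$-function — precisely what Theorem~\ref{T:domain_AXi} and Section~\ref{S:spect_perturb} provide, $\zeta=0$ included — that the boundary terms at the vertices $w\neq v$ really cancel (this is the self-adjointness of the fixed conditions there), and that the algebraic reduction at $v$ is carried out through the differentiated constraint. That last reduction is exactly where the apparent singularity of the form at $\zeta=0$ turns out to be harmless, since the Green's-identity argument never divides by $\zeta$.
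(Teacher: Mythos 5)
Your argument is correct, and it splits naturally into a part that coincides with the paper and a part that takes a different route. For the $\alpha$-derivative you do exactly what the paper does — differentiate $\lambda=h_\alpha[f_\alpha,f_\alpha]$, pick up $|f(v)|^2$ from the explicit $\alpha$-dependence of the form on its $\alpha$-independent domain, and kill the cross term $2\Real h_\alpha[\partial_\alpha f,f]=2\lambda\Real(\partial_\alpha f,f)=0$ using the eigenvalue equation and the differentiated normalization; the only cosmetic difference is that the paper performs the integration by parts explicitly and cancels the boundary terms at $v$ via the $\delta$-condition, while you invoke the form--operator identity $h_\alpha[g,f]=(g,\Hamiltonian f)$ for $f$ in the operator domain, which encodes the same cancellation. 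For the $\zeta$-derivative the routes genuinely diverge: the paper computes $d\lambda/d\zeta=\zeta^{-2}\,d\lambda/d\alpha=|f(v)|^2/\zeta^2=\bigl|\sum_{e\in\Edges_v}\tfrac{df}{dx_e}(v)\bigr|^2$ for $\zeta\neq0$ by the chain rule under $\alpha=-1/\zeta$ and then extends the identity to the Dirichlet point $\zeta=0$ by analyticity of the branch, whereas you run a Green's-identity computation (in the spirit of the proof of Proposition~\ref{prop:Hadamard_formula}) that is uniformly valid at $\zeta=0$: the boundary form at $v$ together with the differentiated constraint $\zeta S=-f(v)$ gives $d\lambda/d\zeta=\overline{f(v)}\,\partial_\zeta S-(\partial_\zeta f)(v)\overline{S}=|S|^2$ with the $\partial_\zeta S$-terms cancelling, and I checked this algebra — it is sound. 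Your version buys a direct treatment of the Dirichlet endpoint with no limiting argument and never divides by $\zeta$; the paper's version is shorter but leans on continuity of both sides at $\zeta=0$. Both arguments need the same analytic-perturbation input from Section~\ref{S:spect_perturb} (an eigenpair branch analytic in $\widetilde{H}^2(\G)$, so that traces and $\partial_\zeta f$ make sense and the conditions at the other vertices pass to $\partial_\zeta f$), which you correctly flag as the step requiring care.
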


\begin{proof}
  The proof follows the pattern of the proof of
  Proposition~\ref{prop:Hadamard_formula} with minor modifications.
  We deal with $\alpha$-derivative first.  The derivative of the
  normalization condition is now
  \begin{equation*}
    2\Real \left( f, \frac{\partial f}{\partial \alpha} \right) = 0.
  \end{equation*}
  Taking the derivative of the quadratic form, see
  \eqref{E:qform_delta}, yields
  \begin{equation*}
    \frac{d\lambda}{d\alpha} = |f(v)|^2 + 2\Real h\left[f, \frac{\partial f}{\partial \alpha}\right].
  \end{equation*}
  Integrating by parts inside the sesqui-linear form $h$ and
  collecting together all the ``boundary'' terms at $v$, we get 
  \begin{equation*}
    h\left[f, \frac{\partial f}{\partial \alpha}\right] 
    = \alpha \overline{\frac{\partial f}{\partial \alpha}(v)} f(v)
    - \overline{\frac{\partial f}{\partial \alpha}(v)} 
    \sum\limits_{e \in \Edges_{v}} \frac{df}{dx_e}(v)
    + \left(\Hamiltonian f, \frac{\partial f}{\partial \alpha}\right)
    = \left(\Hamiltonian f, \frac{\partial f}{\partial \alpha}\right),
  \end{equation*}
  where we used the $\delta$-type condition at the vertex $v$.  Now we use
  $\Hamiltonian f = \lambda f$ and the derivative of the normalization
  condition to get
  \begin{equation*}
    2\Real h\left[f, \frac{\partial f}{\partial \alpha}\right]  = 0,
  \end{equation*}
  and, therefore,
  \begin{equation*}
    \frac{d\lambda}{d\alpha} = |f(v)|^2.
  \end{equation*}
  To deal with the Dirichlet case, we calculate, for $\alpha\neq0$ and
  $\zeta\neq 0$,
  \begin{equation*}
    \frac{d\lambda}{d\zeta} = \frac{1}{\zeta^2}
    \frac{d\lambda}{d\alpha} = \frac{|f(v)|^2}{\zeta^2} 
    = \left| \sum\limits_{e \in \Edges_{v}} \frac{df}{dx_e}(v) \right|^2,
  \end{equation*}
  using condition~\eqref{eq:delta_zeta} in the last step.  Since
  $\lambda(\zeta)$ is an analytic function, the value of the
  derivative at $\zeta=0$ now follows by continuity.
\end{proof}

\section{Eigenvalue interlacing}
\label{sec:eig_interlacing}

We will be assuming that the Hamiltonian $\Hamiltonian$ is
$-d^2/dx^2+V(x)$ with the vertex conditions specified in the
results. The eigenvalues $\lambda_n$ of $\Hamiltonian$ (also referred
to as the eigenvalues of the graph and denoted $\lambda_n(\Graph)$)
are labeled in non-decreasing order, counting their multiplicity.

The first theorem of this section describes the effect of modifying
the vertex condition at a single vertex.  We denote by
$\Graph_{\alpha}$ a compact (not necessarily connected) quantum graph
with a distinguished vertex $v$. Arbitrary self-adjoint conditions are
fixed at all vertices other than $v$, while $v$ is endowed with the
$\delta$-type condition with coefficient $\alpha$:
\begin{equation*}
  \left\{
    \begin{array}{l}
      f \mbox{ is continuous at } v \mbox{ and}\\
      \sum\limits_{e \in \Edges_{v}} \frac{df}{dx_e}(v)=\alpha f(v).
    \end{array}
  \right.
\end{equation*}
  
\begin{theorem}
  \label{thm:interlacing}
  Let $\Graph_{\alpha'}$ be the graph obtained from the graph
  $\Graph_{\alpha}$ by changing the coefficient of the condition at
  vertex $v$ from $\alpha$ to $\alpha'$.  If $-\infty < \alpha <
  \alpha' \leq \infty$ (where $\alpha' = \infty$ corresponds to the
  Dirichlet condition, see section~\ref{sec:conditions}), then
  \begin{equation}
    \label{eq:interlacing_monotone}
    \lambda_n(\Graph_\alpha) \leq \lambda_n(\Graph_{\alpha'}) \leq
    \lambda_{n+1}(\Graph_\alpha).
  \end{equation}
  If the eigenvalue $\lambda_n(\Graph_{\alpha'})$ is simple and it's
  eigenfunction $f$ is such that either $f(v)$ or $\sum f'(v)$ is
  non-zero then the inequalities can be made strict,
  \begin{equation}
    \label{eq:interlacing_strong}
    \lambda_{n}(\Graph_{\alpha}) < \lambda_n(\Graph_{\alpha'})
    < \lambda_{n+1}(\Graph_{\alpha}).
  \end{equation}
\end{theorem}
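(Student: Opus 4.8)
The plan is to prove the interlacing inequality \eqref{eq:interlacing_monotone} using the min-max (Courant--Fischer) characterization of eigenvalues together with the quadratic form description from Theorem~\ref{T:qform}, and then extract strictness \eqref{eq:interlacing_strong} from the Hadamard-type derivative formula in Proposition~\ref{prop:delta_deriv}. The key observation is that the quadratic forms $h_\alpha$ of $\Graph_\alpha$ and $h_{\alpha'}$ of $\Graph_{\alpha'}$ differ only through the vertex $v$: by \eqref{E:qform_delta}, when $\alpha,\alpha'<\infty$ they have the \emph{same} form domain $\{f\in\widetilde H^1(\G): f \text{ continuous, } P_{D,w}F=0 \text{ for } w\neq v\}$, and
\begin{equation*}
  h_{\alpha'}[f,f] - h_{\alpha}[f,f] = (\alpha'-\alpha)\,|f(v)|^2 \geq 0 ,
\end{equation*}
so $h_\alpha \leq h_{\alpha'}$ as quadratic forms. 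When $\alpha'=\infty$ the form domain of $h_{\alpha'}$ is the subspace of that of $h_\alpha$ cut out by the extra constraint $f(v)=0$, on which $h_\infty = h_\alpha$; this is again a nonnegative perturbation in the sense of forms (a rank-one restriction). In both cases $\mathcal D(h_{\alpha'})\subseteq \mathcal D(h_\alpha)$ and $h_\alpha\le h_{\alpha'}$ on the smaller domain, which immediately gives $\lambda_n(\Graph_\alpha)\le\lambda_n(\Graph_{\alpha'})$ by minimax.

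**For the upper bound** $\lambda_n(\Graph_{\alpha'})\le\lambda_{n+1}(\Graph_\alpha)$, I would use the standard rank-one interlacing argument. The passage from $\Graph_\alpha$ to $\Graph_{\alpha'}$ (including $\alpha'=\infty$) is a form perturbation that is ``monotone of rank one at $v$'': in the finite case $h_{\alpha'}-h_\alpha$ is a nonnegative form of rank one (the functional $f\mapsto f(v)$), and in the Dirichlet case we impose one scalar constraint. In either situation, given an $(n+1)$-dimensional subspace $\mathcal M\subseteq\mathcal D(h_\alpha)$ on which the Rayleigh quotient of $h_\alpha$ is $\le\lambda_{n+1}(\Graph_\alpha)$, the subspace $\mathcal M_0 := \{f\in\mathcal M : f(v)=0\}$ has dimension $\ge n$, lies in $\mathcal D(h_{\alpha'})$, and there $h_{\alpha'}=h_\alpha$, so its Rayleigh quotient is still $\le\lambda_{n+1}(\Graph_\alpha)$; minimax then yields $\lambda_n(\Graph_{\alpha'})\le\lambda_{n+1}(\Graph_\alpha)$. (Using the parametrization \eqref{E:extended_compl} or \eqref{eq:delta_zeta}, the finite and Dirichlet cases can be treated uniformly, avoiding a separate limiting argument, which is a small cosmetic gain.)

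**For the strict inequalities**, I would argue as follows. Consider the analytic eigenvalue branch $\beta\mapsto\lambda_n(\Graph_\beta)$ on $(\alpha,\alpha']$ (using the reparametrization $\zeta$ from \eqref{eq:delta_zeta} so that $\alpha'=\infty$ is an interior regular point); by the results of Section~\ref{S:spect_perturb} this is piecewise analytic and monotone nondecreasing. By Proposition~\ref{prop:delta_deriv}, wherever the branch is simple its derivative equals $|f(v)|^2$ (or $|\sum f'(v)|^2$ in the $\zeta$-parametrization), which vanishes only if the eigenfunction has \emph{both} $f(v)=0$ and $\sum_e f'(v)=0$ — and then $f$ is simultaneously an eigenfunction for \emph{every} $\beta$, making $\lambda_n$ constant in $\beta$ rather than just critical. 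Hence under the hypothesis that $\lambda_n(\Graph_{\alpha'})$ is simple with eigenfunction satisfying $f(v)\ne 0$ or $\sum f'(v)\ne 0$, the branch is strictly increasing near $\alpha'$, giving $\lambda_n(\Graph_\alpha)<\lambda_n(\Graph_{\alpha'})$; a symmetric argument applied to the branch through $\lambda_{n+1}(\Graph_\alpha)$, combined with the fact that at an eigenvalue crossing the relevant eigenfunction must have $f(v)=0$ (forced by the interlacing being non-strict), yields $\lambda_n(\Graph_{\alpha'})<\lambda_{n+1}(\Graph_\alpha)$.

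**The main obstacle** I anticipate is handling the Dirichlet endpoint $\alpha'=\infty$ cleanly: the naive description has the form domain jumping (a constraint appears discontinuously as $\alpha\to\infty$), so one must either pass to the $\zeta$-parametrization where everything is analytic across $\zeta=0$, or justify the limit $\lambda_n(\Graph_\alpha)\to\lambda_n(\Graph_\infty)$ as $\alpha\to\infty$ directly via monotone convergence of quadratic forms (a standard but slightly technical Kato-type argument). A second, more delicate point is the strictness bookkeeping: one must carefully track \emph{which} eigenfunction's vertex data controls the derivative on each side of a (possible) crossing, and rule out the degenerate case where $f(v)=\sum f'(v)=0$, in which case $f$ decouples the vertex and the inequalities genuinely need not be strict — so the hypothesis in the theorem is exactly what is needed, and the argument is really about showing it suffices.
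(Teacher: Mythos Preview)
Your proposal is correct and essentially identical to the paper's proof: both derive the non-strict inequalities from the min-max principle and the quadratic-form comparison (the paper inserts $\lambda_n(\Graph_\infty)$ as an explicit intermediate step in the chain, which your rank-one argument on the subspace $\{f(v)=0\}$ achieves directly and uniformly), and both invoke Proposition~\ref{prop:delta_deriv} for the strict inequalities. Your strictness discussion is in fact more careful than the paper's one-line appeal to positivity of the derivative.
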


\begin{proof}
  The case of strict inequalities follows simply from the positivity
  of the derivative of $\lambda_n(\Graph_{\alpha'})$ with respect to
  the parameter of the vertex condition,
  Proposition~\ref{prop:delta_deriv}.  For the possibly degenerate
  case we directly use the monotonicity of the quadratic form and
  rank-one nature of the perturbation.

  Denoting by $\Graph_{\infty}$ the graph with the Dirichlet condition
  at the vertex $v$, we will actually prove the chain of inequalities
  \begin{equation}
    \label{eq:interlacing_monotone_mod}
    \lambda_n(\Graph_\alpha) \leq \lambda_n(\Graph_{\alpha'}) \leq
    \lambda_n(\Graph_{\infty}) \leq \lambda_{n+1}(\Graph_\alpha),
  \end{equation}
  which is obviously equivalent to inequality
  (\ref{eq:interlacing_monotone}).  Since we are now considering the
  Dirichlet case separately, we will assume that $\alpha'\neq\infty$.

  Consider the quadratic forms $h_\alpha$, $h_{\alpha'}$ and
  $h_\infty$ of the corresponding Hamiltonians.  According to the
  discussion in section~\ref{sec:conditions}, we have
  \begin{equation*}
    h_\alpha[f,f] = h_\infty[f,f] + \alpha|f(v)|^2 \quad
    \mbox{and} \quad
    h_{\alpha'}[f,f] = h_\infty[f,f] + \alpha' |f(v)|^2
  \end{equation*}
  on the appropriate subspaces of $\widetilde{H}^1(\G)(=\bigoplus_{e\in\Edges}H^1(e))$.  In
  fact, $D(h_\alpha) = D(h_{\alpha'})$ and $D(h_\infty) = \{f \in
  D(h_\alpha): f(v) = 0\}$.

  All inequalities follow from the min-max description of the
  eigenvalues, namely
  \begin{equation}
    \label{eq:minimax}
    \lambda_n = \min_{X \subset D:\, \dim(X)=n}\ \max_{f\in X:\, \|f\|=1} h[f,f].
  \end{equation}

  The first inequality in (\ref{eq:interlacing_monotone_mod})
  follows immediately from the observation that $h_{\alpha'} \geq
  h_\alpha$ for all $f$.

  The domain $D(h_\infty)$ is smaller than $D(h_{\alpha'})$ and the
  forms $h_\infty$ and $h_{\alpha'}$ agree on $D(h_\infty)$.
  Minimization over a smaller space results in a larger result, implying
  the second inequality in (\ref{eq:interlacing_monotone_mod}).

  The last inequality follows from the fact that $D(h_\infty)$ is a
  co-dimension one subspace of $D(h_{\alpha'})$.  To provide more
  detail, let the minimum for $\lambda_{n+1}(\Graph_\alpha)$ be
  achieved on the subspace $Y$ (which is the span of the first $n+1$
  eigenvectors) of dimension $n+1$. Then there is a subspace
  $Y_\infty$ of dimension $n$, such that $Y_\infty \subset Y$ and
  $Y_\infty\subset D(h_\infty)$.  Then
  \begin{align*}
    \lambda_n(\Graph_\infty)
    &\ =\  \min_{X:\, \dim(X)=n}\
    \max_{f\in X} h_\infty
    \ \leq\  \max_{f\in Y_\infty} h_\infty \\
    &\ =\  \max_{f\in Y_\infty} h_\alpha
    \ \leq\  \max_{f\in Y} h_\alpha
    \ =\  \lambda_{n+1}(\Graph_\alpha).
  \end{align*}
  This is precisely the last needed inequality.
\end{proof}

This theorem allows us to prove a simple but useful criterion for the
simplicity of the spectrum of a tree.

\begin{corollary}
  \label{cor:simple_trees}
  Let $T$ be a tree with a $\delta$-type condition at every internal
  vertex and an extended $\delta$-type condition at every vertex of
  degree 1.  If the eigenvalue $\lambda$ of $T$ has an eigenfunction
  that is non-zero on all internal vertices of $T$, then $\lambda$ is
  simple.

  Equivalently, if an eigenvalue $\lambda$ of the tree $T$ is multiple, there is an
  internal vertex $v$ such that all functions from the eigenspace of
  $\lambda$ vanish on $v$.
\end{corollary}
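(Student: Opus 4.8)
The plan is to prove the contrapositive: assuming $\lambda$ is a multiple eigenvalue of $T$, produce an internal vertex at which every eigenfunction of $\lambda$ vanishes. The engine is Theorem~\ref{thm:interlacing}: removing a $\delta$-type vertex (by passing to $\alpha'=\infty$, i.e.\ imposing the Dirichlet condition there) can shift each $\lambda_n$ up by at most one ``slot,'' so it can reduce the multiplicity of a fixed eigenvalue $\lambda$ by at most one; and crucially, if the eigenfunction realizing simplicity does \emph{not} vanish at that vertex, the inequality is strict and the multiplicity strictly drops.

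First I would set up an induction on the number of internal vertices of $T$. The base case is a star graph (a single internal vertex $v$ with pendant edges carrying extended $\delta$-conditions); here I would argue directly. Suppose $\lambda$ is multiple and every eigenfunction is nonzero at $v$. Pick two linearly independent eigenfunctions; a suitable linear combination $g$ still satisfies $g(v)\neq 0$ but, say, makes $\sum_e g'_e(v)$ take a prescribed value, and in fact one can arrange a nonzero eigenfunction vanishing at $v$ unless $d_v=\,$? — more robustly, I would just invoke the strict-interlacing clause: if $\lambda=\lambda_n(\G_\alpha)$ with multiplicity $\geq 2$ and some eigenfunction has $f(v)\neq 0$, then passing to $\G_\infty$ we get $\lambda_n(\G_\alpha)<\lambda_n(\G_\infty)$, while the remaining eigenfunctions (those vanishing at $v$) are still eigenfunctions of $\G_\infty$ with eigenvalue $\lambda$; counting via (\ref{eq:interlacing_monotone_mod}) shows the multiplicity of $\lambda$ for $\G_\infty$ is exactly one less than for $\G_\alpha$. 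But $\G_\infty$ for a star is a disjoint union of intervals with (extended) $\delta$-conditions at both ends, whose eigenvalues are all simple — a contradiction once the original multiplicity was $\geq 2$. (Actually for the induction I want the cleaner statement that $\G_\infty$ is a disjoint union of strictly smaller trees, so the star is subsumed.)

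For the inductive step, let $\lambda$ be multiple on $T$ and suppose for contradiction that every eigenfunction is nonzero at every internal vertex. Choose any internal vertex $v$ and form $T_\infty=\G_\infty$ by imposing the Dirichlet condition at $v$: topologically this disconnects $v$, turning $T$ into a disjoint union of trees $T_1,\dots,T_{d_v}$, each with strictly fewer internal vertices, each carrying $\delta$-conditions at its internal vertices and extended $\delta$-conditions (Dirichlet, in particular, at the copies of $v$) at the degree-one vertices. By the strict-interlacing half of Theorem~\ref{thm:interlacing}, since some (indeed every) eigenfunction $f$ of $\lambda$ on $T$ has $f(v)\neq 0$, the multiplicity of $\lambda$ as an eigenvalue of $T_\infty$ is exactly $\mathrm{mult}_T(\lambda)-1$, which is still $\geq 1$. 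Hence $\lambda$ is an eigenvalue of at least one piece, say $T_1$. Its eigenfunction, extended by zero to the other pieces, is an eigenfunction of $T_\infty$; but I need an eigenfunction that is nonzero at the internal vertices of $T_1$ in order to apply the induction hypothesis to $T_1$. The main obstacle is precisely this: an eigenfunction of the subtree $T_1$ need not be the restriction of an eigenfunction of the full tree $T$, so the hypothesis ``nonzero at all internal vertices of $T$'' does not transfer for free.

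To overcome this I would argue as follows. If $\mathrm{mult}_{T_1}(\lambda)\geq 2$, the induction hypothesis applied to $T_1$ already yields an internal vertex of $T_1$ — hence an internal vertex of $T$ — on which the whole $\lambda$-eigenspace of $T_1$ vanishes; but that eigenspace includes the zero-extensions of restrictions to $T_1$ of... here one must check these restrictions span, which they need not. The cleaner route, which I would take, is to avoid subtree-restriction entirely and instead do the induction on $T$ directly via a ``peeling'' argument: if $\lambda$ is multiple, there exist two linearly independent eigenfunctions $f_1,f_2$; at any fixed internal vertex $v$ the values $(f_1(v),f_2(v))$ span a line, so some nontrivial combination $g=c_1f_1+c_2f_2$ has $g(v)=0$, and $g$ is a nonzero eigenfunction. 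If $g$ vanishes on some internal vertex we are done; otherwise repeat at another vertex. Since $T$ is a tree, a nonzero eigenfunction cannot vanish together with its derivative at an internal vertex, and a short Sturm-type counting argument on trees bounds how many internal vertices an eigenfunction in a two-dimensional eigenspace can simultaneously avoid vanishing at — forcing, when the eigenspace has dimension $\geq 2$, the existence of an internal vertex annihilated by the whole eigenspace. I expect the delicate point to be making this last combinatorial/Sturm step rigorous, i.e.\ correctly exploiting the tree structure so that ``multiple eigenvalue $\Rightarrow$ a shared nodal vertex''; everything else is a bookkeeping application of the interlacing theorem.
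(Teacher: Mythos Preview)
Your proposal has a real gap, which you yourself flag: after disconnecting at $v$ via the Dirichlet condition, eigenfunctions of the subtree $T_1$ need not be restrictions of eigenfunctions of $T$, so the hypothesis ``nonzero at all internal vertices'' does not transfer and the induction stalls. Your fallback ``peeling'' argument --- forming a combination $g$ with $g(v)=0$ --- only produces \emph{one} eigenfunction vanishing at $v$, not the whole eigenspace, and the promised ``Sturm-type counting argument'' that would bridge this is never supplied. As written, the proof does not close.

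The paper avoids the obstacle by a different cut. Fix the eigenfunction $f$ that is nonzero at every internal vertex, pick an internal vertex $v$, and split $T$ into subtrees $T'_1,\dots,T'_{d_v}$. On each subtree $T'$, $v$ becomes a degree-one vertex; since $f(v)\neq 0$, you can impose the $\delta$-type condition with parameter $\alpha=f'_e(v)/f(v)$ there (rather than Dirichlet). By construction $f|_{T'}$ is then a genuine eigenfunction of $T'_\alpha$, still nonzero at every internal vertex of $T'$, so the inductive hypothesis applies and $\lambda$ is \emph{simple} on $T'_\alpha$. Now take a second eigenfunction $g$ of $T$ with eigenvalue $\lambda$; an easy argument shows that on at least one subtree $T'$ the restriction $g|_{T'}$ is nonzero and not a scalar multiple of $f|_{T'}$. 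That restriction is an eigenfunction of $T'_{\alpha'}$ for some $\alpha'\leq\infty$, and $\alpha'\neq\alpha$ by the simplicity just established. But then the strict interlacing inequality~(\ref{eq:interlacing_strong}) (applicable because $\lambda$ is simple on $T'_\alpha$ and $f(v)\neq 0$) forces $\lambda$ strictly between consecutive eigenvalues of the other graph, contradicting $\lambda\in\sigma(T'_{\alpha'})$. The point you were missing is precisely this: tailor the boundary parameter to $f$ so that $f$ itself survives restriction, instead of imposing Dirichlet and losing control of the subtree eigenfunctions. (Also, your base case should be the interval --- zero internal vertices --- not the star.)
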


\begin{proof}
  The two statements are almost contrapositives of each other, modulo
  the following observation: if for every internal vertex there is an
  eigenfunction that is non-zero on it, one can construct an
  eigenfunction which is non-zero on all internal vertices.  Indeed,
  if $m$ is the dimension of the eigenspace of $\lambda$, then the
  subspace of the eigenfunctions vanishing on any given $v$ is at most
  $m-1$.  A finite union of subspaces of dimension $m-1$ cannot cover
  the entire eigenspace.

  We will work by induction on the number of internal vertices.  If a
  tree has no internal vertices, it is an interval and there is
  nothing to prove since all eigenvalues are simple.

  Assume the contrary: there is an eigenfunction $f$ which is not zero
  on all internal vertices of the tree, but $\lambda$ is not simple.
  Take an arbitrary internal vertex $v$ and another eigenfunction $g$.
  Cutting the tree at the vertex $v$ we obtain $d_v$ subtrees.  On at
  least one of them the function $g$ is not identically zero and not a
  multiple of $f$.  Let $T'$ be a such a subtree.

  Then there is an $\alpha<\infty$ such that $(\lambda,f)$ is an
  eigenpair of the tree $T'_\alpha$ endowed with the condition $f'(v)
  = \alpha f(v)$.  Similarly, there is an $\alpha'\leq\infty$ such
  that $(\lambda,g)$ is an eigenpair of the tree $T'_{\alpha'}$.  By
  inductive hypothesis, $\lambda$ is simple on $T'_\alpha$, therefore
  $\alpha' \neq \alpha$.  This, however, contradicts
  inequality~\eqref{eq:interlacing_strong}.
\end{proof}

The next theorem deals with the modification of the structure of the
graph by gluing a pair of vertices together.

\begin{theorem}
  \label{thm:interlacing_join}
  Let $\Graph$ be a compact (not necessarily connected) graph.  Let
  $v_0$ and $v_1$ be vertices of the graph $\Graph$ endowed with the
  $\delta$-type conditions, i.e.
  \begin{equation*}
    \begin{cases}
      f \mbox{ is continuous at } v_j \mbox{ and}\\
      \sum\limits_{e \in \Edges_{v_j}} \frac{df}{dx_e}(v_j)=\alpha_j f(v_j),
      \qquad j=0,1.
    \end{cases}
  \end{equation*}
  Arbitrary self-adjoint conditions are allowed at all other vertices
  of $\Graph$.

  Let $\Graph'$ be the graph obtained from $\Graph$ by gluing the
  vertices $v_0$ and $v_1$ together into a single vertex $v$, so that
  $\Edges_v=\Edges_{v_0}\cup \Edges_{v_1}$, and endowed with the $\delta$-type
  condition
  \begin{equation}\label{E:add}
    \sum\limits_{e \in \Edges_{v}} \frac{df}{dx_e}(v)
    = (\alpha_0+\alpha_1) f(v).
  \end{equation}

  Then the eigenvalues of the two graphs satisfy the inequalities
  \begin{equation}
    \label{eq:interlacing_delta}
    \lambda_n(\Graph) \leq \lambda_n(\Graph') \leq \lambda_{n+1}(\Graph).
  \end{equation}
\end{theorem}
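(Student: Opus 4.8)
The plan is to proceed exactly as in the proof of Theorem~\ref{thm:interlacing}: I would realize the transition from $\Graph$ to $\Graph'$ as the restriction of the quadratic form to a subspace of codimension one on which the two forms coincide, and then read off~\eqref{eq:interlacing_delta} from the min-max principle~\eqref{eq:minimax}.

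Concretely, let $Q[f,f]$ denote the part of the quadratic form common to both graphs,
\begin{equation*}
  Q[f,f] := \sum_{e\in\Edges}\int_e\left|\frac{df}{dx}\right|^2 dx
  + \sum_{e\in\Edges}\int_e V|f|^2\,dx
  + \sum_{w\neq v_0,v_1}\langle\Lambda_w P_{R,w}F, P_{R,w}F\rangle ,
\end{equation*}
where the last sum runs over the remaining vertices with their fixed self-adjoint conditions written in the form (C) of Theorem~\ref{T:BC}. By the discussion of $\delta$-type conditions in Section~\ref{sec:conditions} together with Theorem~\ref{T:qform} for the other vertices, we then have
\begin{equation*}
  h_\Graph[f,f] = Q[f,f] + \alpha_0|f(v_0)|^2 + \alpha_1|f(v_1)|^2,
  \qquad
  h_{\Graph'}[f,f] = Q[f,f] + (\alpha_0+\alpha_1)|f(v)|^2 ,
\end{equation*}
where $D(h_\Graph)$ consists of the $f\in\widetilde{H}^1(\G)$ that are continuous at $v_0$, at $v_1$, and at every other vertex $w$ (and satisfy $P_{D,w}F=0$ there), while $D(h_{\Graph'})$ is cut out by the same requirements except that continuity is now imposed across all of $\Edges_{v_0}\cup\Edges_{v_1}$ at once. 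Hence
\begin{equation*}
  D(h_{\Graph'}) = \{\, f\in D(h_\Graph)\ :\ f(v_0)=f(v_1)\,\}.
\end{equation*}
Since $v_0\neq v_1$, the functional $f\mapsto f(v_0)-f(v_1)$ is bounded and not identically zero on $D(h_\Graph)$ — this remains true even if an edge already joins $v_0$ and $v_1$, which simply becomes a loop in $\Graph'$ — so $D(h_{\Graph'})$ is a closed subspace of $D(h_\Graph)$ of codimension one. Moreover, on $D(h_{\Graph'})$ one has $f(v_0)=f(v_1)=:f(v)$, whence $\alpha_0|f(v_0)|^2+\alpha_1|f(v_1)|^2=(\alpha_0+\alpha_1)|f(v)|^2$ and therefore $h_\Graph[f,f]=h_{\Graph'}[f,f]$ for every $f\in D(h_{\Graph'})$.

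From here both inequalities in~\eqref{eq:interlacing_delta} follow word for word as in Theorem~\ref{thm:interlacing}. For $\lambda_n(\Graph)\leq\lambda_n(\Graph')$ one observes that the min-max~\eqref{eq:minimax} for $\lambda_n(\Graph')$ ranges over $n$-dimensional subspaces $X\subset D(h_{\Graph'})\subset D(h_\Graph)$, on which $h_{\Graph'}=h_\Graph$, and minimizing over fewer subspaces can only enlarge the value. For $\lambda_n(\Graph')\leq\lambda_{n+1}(\Graph)$ one takes $Y\subset D(h_\Graph)$ to be the span of the first $n+1$ eigenfunctions of $\Graph$; since $D(h_{\Graph'})$ has codimension one, $Y_{\Graph'}:=Y\cap D(h_{\Graph'})$ has dimension at least $n$, and
\begin{equation*}
  \lambda_n(\Graph') \ \leq\ \max_{f\in Y_{\Graph'}} h_{\Graph'}[f,f]
  \ =\ \max_{f\in Y_{\Graph'}} h_{\Graph}[f,f]
  \ \leq\ \max_{f\in Y} h_\Graph[f,f]
  \ =\ \lambda_{n+1}(\Graph),
\end{equation*}
all maxima being taken over normalized $f$.

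The only step requiring genuine care is the identification of the form domains — the bookkeeping showing that gluing $v_0$ and $v_1$ alters the quadratic form and its domain in exactly one way, namely by adding the single scalar constraint $f(v_0)=f(v_1)$ and merging the two $\delta$-terms into one. Once that is in place the rest is a direct transcription of the argument already used for Theorem~\ref{thm:interlacing}. (If in addition $\lambda_n(\Graph')$ is simple with an eigenfunction not vanishing at $v$, the same perturbation-theoretic reasoning as there upgrades~\eqref{eq:interlacing_delta} to strict inequalities, although this is not asserted in the statement.)
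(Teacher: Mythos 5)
Your proof is correct and follows essentially the same route as the paper's: both identify $D(h_{\Graph'})$ as the codimension-one subspace $\{f\in D(h_\Graph):f(v_0)=f(v_1)\}$ on which the two quadratic forms coincide (the two $\delta$-terms merging into the single $(\alpha_0+\alpha_1)$-term), and then transfer the min-max argument used for the second and third inequalities in the proof of Theorem~\ref{thm:interlacing}. Your explicit bookkeeping of the form domains and of the dimension count $\dim(Y\cap D(h_{\Graph'}))\geq n$ simply spells out details the paper leaves implicit.
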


\begin{proof}
  Similarly to the proof of Theorem~\ref{thm:interlacing}, we
  consider the quadratic forms of the two graphs and observe that they
  are defined by exactly the same expression (see
  section~\ref{sec:conditions}).  However, joining the vertices
  together imposes an additional restriction on the domain of the quadratic form of
  the graph $\Graph'$, namely
  \begin{displaymath}
    D(h') = \left\{ f \in D(h) :\, f(v_0)=f(v_1) \right\}.
  \end{displaymath}
  Thus, the domain $D(h')$ is a co-dimension one subspace of $D(h)$
  and the rest of the proof is identical to the proofs of the second
  and third inequalities in (\ref{eq:interlacing_monotone_mod}).
\end{proof}

Notice that if the domain of $h'$ had co-dimension $k$, then one would
have obtained by applying the same argument the inequality
\begin{equation*}
  \lambda_n(\Graph) \leq \lambda_n(\Graph') \leq \lambda_{n+k}(\Graph).
\end{equation*}
This observation immediately leads to the following generalization of
Theorem \ref{thm:interlacing_join}:

\begin{theorem}
  \label{thm:interlacing_join_k}
  Let the graph $\Graph'$ be obtained from $\Graph$ by $k$
  identifications, for example by gluing vertices $v_0$, $v_1$, \ldots
  $v_k$ into one, or pairwise gluing of $k$ pairs of vertices. Each
  identification results also in adding parameters $\alpha_j$ in the
  vertex $\delta$-type conditions, as in (\ref{E:add}). Then
  \begin{equation}
    \label{eq:interlacing_join_k}
    \lambda_n(\Graph) \leq \lambda_n(\Graph') \leq \lambda_{n+k}(\Graph).
  \end{equation}
\end{theorem}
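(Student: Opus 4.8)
The plan is to reduce the statement to $k$ successive applications of a single-identification result, or, equivalently, to redo the codimension count once at the level of quadratic forms. I will follow the second route, as it is cleaner and mirrors exactly the argument already given for Theorem~\ref{thm:interlacing_join}. The key observation, already flagged in the remark preceding the theorem, is that each of the operations described (gluing $v_0,\dots,v_k$ into one vertex; or pairwise gluing $k$ pairs) leaves the \emph{expression} for the $\delta$-type quadratic form unchanged — see~\eqref{E:qform_delta} and the surrounding discussion in Section~\ref{sec:conditions} — because after every identification one simply adds the relevant $\alpha_j$'s as in~\eqref{E:add}, and the form of a graph with $\delta$-type conditions is $\sum_e\int_e|df/dx|^2 + \sum_e\int_e V|f|^2 + \sum_v\alpha_v|f(v)|^2$, which only sees $f$ through its edge integrals and its (common) vertex values. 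Hence $h'$ and $h$ are given by the same formula, and the only effect of the identifications is to shrink the form domain: $D(h') = \{f\in D(h) : \text{the appropriate vertex-value equalities hold}\}$.

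First I would make the codimension count explicit. For the $k$-fold gluing of $v_0,\dots,v_k$ into a single vertex $v$, continuity at $v$ adds the constraints $f(v_0)=f(v_1)=\cdots=f(v_k)$, which is a system of $k$ independent linear conditions on $D(h)$, so $D(h')$ has codimension $k$ in $D(h)$. For the variant of $k$ pairwise gluings, each pair $(v_{2i-1},v_{2i})$ contributes one constraint $f(v_{2i-1})=f(v_{2i})$, and these $k$ constraints are independent since they involve disjoint pairs of vertices; again $\mathrm{codim}\,D(h')=k$. (The general case of several identifications of various multiplicities works the same way: if the $j$-th identification merges $m_j+1$ vertices it contributes $m_j$ constraints, and $\sum_j m_j = k$.) In every case $D(h')\subset D(h)$, $h$ and $h'$ coincide on $D(h')$, and $\mathrm{codim}\,D(h') = k$.

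Next I would run the two min-max arguments exactly as in~\eqref{eq:interlacing_monotone_mod}. The lower bound $\lambda_n(\Graph)\le\lambda_n(\Graph')$ is immediate: in the min-max formula~\eqref{eq:minimax} for $\lambda_n(\Graph')$ we minimize over $n$-dimensional subspaces of the \emph{smaller} domain $D(h')$, on which the form is the same, so the minimum can only increase. For the upper bound $\lambda_n(\Graph')\le\lambda_{n+k}(\Graph)$, let $Y\subset D(h)$ be the $(n+k)$-dimensional subspace on which the min-max value $\lambda_{n+k}(\Graph)$ is attained (the span of the first $n+k$ eigenvectors of $\Graph$). Since $D(h')$ has codimension $k$ in $D(h)$, the intersection $Y\cap D(h')$ has dimension at least $(n+k)-k = n$; pick an $n$-dimensional subspace $Y'\subset Y\cap D(h')$. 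Then
\begin{equation*}
  \lambda_n(\Graph') \ \le\ \max_{f\in Y',\,\|f\|=1} h'[f,f]
  \ =\ \max_{f\in Y',\,\|f\|=1} h[f,f]
  \ \le\ \max_{f\in Y,\,\|f\|=1} h[f,f]
  \ =\ \lambda_{n+k}(\Graph),
\end{equation*}
using that $h'=h$ on $Y'\subset D(h')$ and that $Y'\subset Y$. Combining the two bounds gives~\eqref{eq:interlacing_join_k}.

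The only point requiring any care — and the one I would regard as the main (very mild) obstacle — is the independence of the linear constraints cutting $D(h)$ down to $D(h')$, i.e. verifying that the codimension is exactly $k$ and not smaller, for each of the allowed identification patterns; once the vertices to be merged are distinct, this is immediate because the constraints $f(v_i)=f(v_j)$ for the various groups involve pairwise disjoint vertex sets and the vertex-evaluation functionals $f\mapsto f(v)$ are linearly independent on $D(h)\subset\widetilde{H}^1(\G)$. Everything else is a verbatim repetition of the min-max reasoning already used for Theorem~\ref{thm:interlacing_join}, now with codimension $k$ in place of $1$.
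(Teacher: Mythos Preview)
Your proof is correct and follows essentially the same approach as the paper: the paper also notes that the form domain $D(h')$ has codimension $k$ in $D(h)$ with the same form expression, so that the min-max argument from Theorem~\ref{thm:interlacing_join} carries over verbatim with $k$ in place of $1$, and it mentions the alternative of $k$ successive applications of Theorem~\ref{thm:interlacing_join}. Your writeup is in fact more detailed than the paper's on the independence of the vertex-value constraints, which the paper leaves implicit.
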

This statement can be also proved by the repeated application of
Theorem~\ref{thm:interlacing_join}.

\begin{remark}
  Theorem~\ref{thm:interlacing} applied to the case
  $\alpha'=\infty$ is also a result about joining $d_v$ vertices
  together, since the vertex Dirichlet condition has the effect of
  disconnecting the edges at the vertex.  Note the difference with the
  result in Theorem~\ref{thm:interlacing_join_k}: the eigenvalues of
  the joined graph are now shifted down, but not further than the next
  eigenvalue of $\Graph$, which contrasts with the weaker ``not
  further than $k$-th next eigenvalue'' result of the Theorem
  \ref{thm:interlacing_join_k}.
\end{remark}

\section{Some applications}
\label{sec:applications}

\subsection{Dependence of the spectrum on the coupling constant $\alpha$ at one vertex}
\label{sec:alpha_dep}

As in section~\ref{sec:eig_interlacing} we consider the family of compact graphs $\Graph_\alpha$ with a distinguished vertex $v$.
Arbitrary self-adjoint conditions are fixed at all vertices other than
$v$, while $v$ is endowed with the $\delta$-type condition with
coefficient $\alpha$:
\begin{equation*}
  \left\{
    \begin{array}{l}
      f \mbox{ is continuous at } v \mbox{ and}\\
      \sum\limits_{e \in \Edges_{v}} \frac{df}{dx_e}(v)=\alpha f(v).
    \end{array}
  \right.
\end{equation*}
We will be using the second condition in the form 
%
\begin{equation}\label{E:extended_compl_again}
  (z+1)\sum_{e \in \Edges_v} \frac{df}{dx_e}(v)
  = \rmi (z-1) f(v),
\end{equation}
where $|z|=1$. The Dirichlet condition corresponds to $z=-1$, while
the Neumann-Kirchhoff corresponds to $z=1$.

We denote by $\cH(z)$ the operator with the above condition at $v$ and
the previously fixed set of conditions at all other vertices.  Then
main result of this section is representation of the spectrum of this
operator as the range at $z$ of an irreducible multiple valued
analytic function defined near the unit circle, plus a fixed discrete
set, see Fig.~\ref{fig:spectral_spiral}.

\begin{figure}[th]
  \centering
  \includegraphics{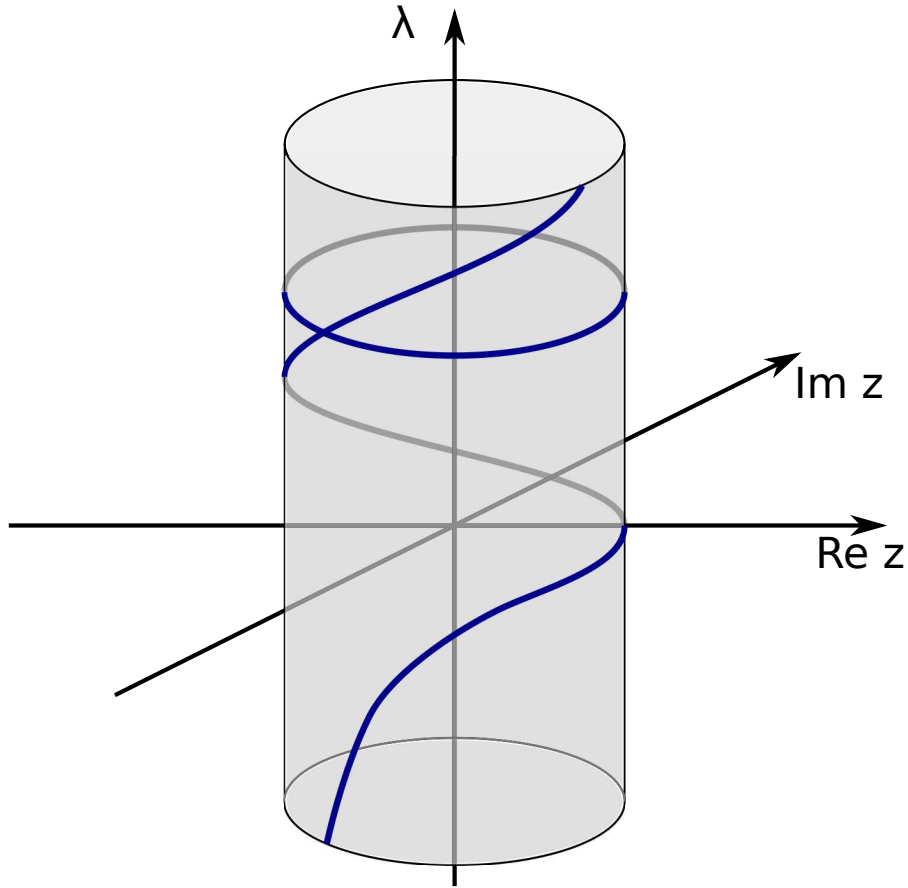}
  \caption{Spectrum of a graph as a function of the parameter $z$,
    which controls the matching condition of the form
    \eqref{E:extended_compl_again} at one of the vertices of the graph.  The
    figure shows the non-decreasing function $\Lambda(z)$ spiraling
    upwards and one of the constant branches from the set $\Delta$.}
  \label{fig:spectral_spiral}
\end{figure}

\begin{theorem}
  There exist a bounded from below discrete set $\Delta\in\R$ and a
  multiple valued function $\Lambda(z)$ (``dispersion relation'')
  analytic in a neighborhood of the unit circle $S^1$ and real on
  $S^1$, such that:
  \begin{enumerate}
  \item For any $z\in S^1$, one has
    \begin{equation}\label{E:dispersion}
      \sigma(\cH(z))=\Lambda(z)\bigcup\Delta,
    \end{equation}
    where $\sigma(A)$ denotes the spectrum of the operator $A$.
  \item The function $\Lambda$ is irreducible, i.e. any of its
    branches determines by analytic continuation the whole function
    $\Lambda$.
  \item Each analytic branch of $\Lambda$ is monotonically increasing
    in the counter-clockwise direction of $S^1$.
  \end{enumerate}
\end{theorem}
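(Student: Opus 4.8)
The plan is to analyze the eigenvalue equation $\cH(z)f=\lambda f$ via the effective secular determinant obtained by eliminating the ``bulk'' of the graph and reducing everything to the distinguished vertex $v$. First I would fix a reference operator, say the decoupled Hamiltonian obtained by imposing the Dirichlet condition $f(v)=0$ at $v$ (i.e.\ $z=-1$), and note that its spectrum is a fixed discrete set; this will be the candidate for $\Delta$ (more precisely, $\Delta$ will consist of those Dirichlet eigenvalues that persist for \emph{all} $z$, which happens exactly when the corresponding eigenfunctions also have vanishing derivative-sum at $v$ and hence satisfy \eqref{E:extended_compl_again} identically). For $\lambda$ outside the Dirichlet spectrum, one can solve the ODE on the rest of the graph with prescribed value $f(v)$, and the map $f(v)\mapsto \sum_{e\in\Edges_v} f'_e(v)$ is multiplication by a scalar meromorphic function $m(\lambda)$ — the Titchmarsh--Weyl / Dirichlet-to-Neumann function of the graph-minus-vertex, analytic in $\lambda$ away from the Dirichlet eigenvalues, real for real $\lambda$, and (by the variational formula, or by general principles for such $m$-functions) with \emph{positive derivative} $m'(\lambda)>0$ on the real axis between poles. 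Then the eigenvalue condition becomes the scalar equation
\begin{equation*}
  (z+1)\,m(\lambda) = \rmi(z-1),
\end{equation*}
i.e., solving for $z$,
\begin{equation*}
  z = \frac{\rmi + m(\lambda)}{\rmi - m(\lambda)} =: g(\lambda),
\end{equation*}
which is precisely the Cayley-type transform sending the real line (values of $m$) to the unit circle. This already gives items (1) and (3) once the function $\Lambda$ is defined as (a branch of) the inverse of $g$.

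Next I would carry out the details in this order. Step one: justify the reduction rigorously — use Theorem~\ref{T:BC}/Theorem~\ref{T:qform} and the fact that $P_{R,v}$ is one-dimensional for a $\delta$-type vertex so that the entire $z$-dependence is genuinely a rank-one (scalar) phenomenon; invoke the analyticity results of Section~\ref{sec:dependence} (Theorems~\ref{T:analit} and~\ref{T:domain_AXi}) to know a priori that the spectrum moves analytically. Step two: establish the properties of $m(\lambda)$: it is real meromorphic on $\R$, its poles are simple and lie at the Dirichlet eigenvalues of $\Graph_\infty$, between consecutive poles it is strictly increasing from $-\infty$ to $+\infty$, and at a Dirichlet eigenvalue whose eigenspace has a member with nonzero $\sum f'(v)$ the residue is nonzero. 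The strict monotonicity $m'>0$ is exactly the content of (a slight extension of) Proposition~\ref{prop:delta_deriv}: reparametrizing $\alpha=-m(\lambda)$ one reads off $d\lambda/d\alpha=|f(v)|^2>0$. Step three: define $\Lambda$ as the multivalued inverse of $z=g(\lambda)$. Since $g'(\lambda)=\frac{2\rmi\, m'(\lambda)}{(\rmi-m(\lambda))^2}\neq 0$ wherever $m$ is finite, $g$ is a local biholomorphism near the real axis, so $\Lambda$ is analytic in a neighborhood of $S^1$ minus the finitely-many images of the Dirichlet poles; at those points one checks that $\Lambda$ still extends analytically (the pole of $m$ corresponds to $z=-1$, the Dirichlet value, which is a regular value of the branch) — here one must be a little careful that the branch passing through $z=-1$ does so with a genuine eigenvalue, using that the residue is nonzero. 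Monotonicity in the counter-clockwise direction, item (3), follows because as $\lambda$ increases through $\R$, $m(\lambda)$ increases, and the Cayley transform $t\mapsto (\rmi+t)/(\rmi-t)$ traverses $S^1$ counter-clockwise; I would pin down the orientation by a direct computation of $\frac{d}{d\lambda}\arg g(\lambda)$, which equals $2m'(\lambda)/(1+m(\lambda)^2)>0$.

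For item (2), irreducibility, the plan is to show that the branches of $\Lambda$ are permuted transitively by the monodromy around $S^1$, equivalently that $g:\ \text{(real } \lambda\text{-axis, compactified)} \to S^1$ realizes $\R\cup\{\infty\}$ as a connected cover. Concretely: order the Dirichlet eigenvalues (the poles of $m$) as $\mu_1<\mu_2<\cdots$; on each interval $(\mu_k,\mu_{k+1})$ the function $m$ increases from $-\infty$ to $+\infty$, hence $g$ maps this interval \emph{onto the whole circle} $S^1\setminus\{-1\}$, wrapping around once; these pieces are glued at the points $\lambda=\mu_k$, all of which map to $z=-1$. Thus going once around $S^1$ counter-clockwise moves from the branch living over $(\mu_k,\mu_{k+1})$ to the one over $(\mu_{k+1},\mu_{k+2})$, so a single loop's monodromy is the shift $k\mapsto k+1$, which acts transitively on the (countable) set of branches — hence any branch continues analytically to all of $\Lambda$. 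The main obstacle I anticipate is \emph{not} the algebraic reduction but the careful bookkeeping at the exceptional points: separating the truly $z$-independent part of the spectrum (the set $\Delta$, coming from Dirichlet eigenfunctions with $\sum f'(v)=0$) from the moving part, and checking that at each pole $\mu_k$ of $m$ the function $\Lambda$ extends analytically through $z=-1$ and that the gluing of branches there is exactly the one-step shift. This requires a local analysis near $\lambda=\mu_k$, $z=-1$, splitting the eigenspace at $\mu_k$ into the part seen by $m$ (residue) and the part that contributes to $\Delta$; once that local picture is in hand, the global statements (1)--(3) assemble from the monotonicity of $m$ and the Cayley transform as above.
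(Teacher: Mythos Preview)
Your approach via the Weyl--Titchmarsh $m$-function is sound and genuinely different from the paper's argument. The paper proceeds abstractly: it first proves (Lemma~\ref{L:multipl}) that any $\lambda$ lying in $\sigma(\cH(z))$ for two distinct $z\in S^1$ must lie in the spectrum for \emph{every} $z$, defines $\Delta$ as this common part, and then shows that the remaining analytic set $\bbar\Sigma$ (spectra with the constant branches excised) is a single smooth analytic curve --- smoothness from simplicity of the moving eigenvalues and Rellich's theorem, irreducibility from the observation that each connected component of $\bbar\Sigma$ must project onto all of $\R$, so two components would force equal eigenvalues at distinct $z$, contradicting Lemma~\ref{L:multipl}. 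Your route instead writes the secular equation explicitly as $z=g(\lambda)=(\rmi+m(\lambda))/(\rmi-m(\lambda))$, identifies the branches of $\Lambda$ with the intervals between consecutive poles of $m$, and obtains irreducibility from the concrete one-step shift monodromy. The paper's argument avoids the $m$-function/residue bookkeeping by exploiting the rank-one form perturbation directly; your argument gives an explicit formula for $\Lambda$ and makes the spiral and the monodromy completely transparent. One detail to recheck: the computation of $\frac{d}{d\lambda}\arg g(\lambda)$ actually gives $-2m'(\lambda)/(1+m(\lambda)^2)<0$, so with the parametrization~(\ref{E:extended_compl_again}) increasing $\lambda$ moves $z$ \emph{clockwise}; this is an orientation convention, not a structural gap, but you should reconcile it with the stated direction before writing up.
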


\begin{proof}
Notice that, as we already know, for each $z\in S^1$, $\cH(z)$ is a bounded from below self-adjoint operator with a discrete spectrum. Let $\lambda_n(z)$ be its $n$th  eigenvalue, counted with multiplicity in non-decreasing order. Then, according to the perturbation theory, it is continuous on the unit circle cut at $z=-1$, where one can observe that the ground state $\lambda_1$ has an one-side limit equal to $-\infty$.
\begin{lemma}\label{L:monot}\indent
\begin{enumerate}
\item The function $\lambda_n(z)$ is non-decreasing in the counter-clockwise direction on the unit circle.
\item There exists a function $F(\lambda,z)$ analytic in a neighborhood of the unit circle $S^1$ such that
$\lambda\in\sigma(\cH(z))$ if and only if $F(\lambda,z)=0$.
\end{enumerate}
\end{lemma}
Indeed, the first claim of the Lemma is a rephrasing of a part of
Theorem~\ref{thm:interlacing} (and, essentially, a consequence of
monotonicity of the quadratic form (\ref{E:quadr}) with respect to
$\alpha$).  The second claim is a direct consequence of Theorem~\ref{T:analit}.

\begin{lemma}\label{L:multipl}
  If either
  \begin{enumerate}
  \item $\lambda$ is a multiple eigenvalue of $\cH(z)$, for some $z\in
    S^1$, or
  \item $\lambda$ belongs to the spectra of $\cH(z_1)$ and $\cH(z_2)$
    for two different points $z_1,z_2\in S^1$,
  \end{enumerate}
  then $\lambda\in\sigma(\cH(z))$ for {\bf any} $z\in S^1$.
\end{lemma}

Indeed, if the eigenspace of $\cH(z)$ corresponding to $\lambda$ is
more than one-dimensional, then it contains an eigenfunction vanishing
at the vertex $v$. Then, assuming $z\neq -1$, we conclude from
(\ref{E:extended_compl_again}) that the sum of derivatives is zero as well.
Therefore this eigenfunction works for any point $z\in S^1$.  In the
case $z=-1$ we select an eigenfunction with vanishing sum of
derivatives at $v$ and it automatically vanishes at $v$ since it
satisfies the Dirichlet condition.

Let now $\lambda\in\sigma(\cH(z_1))\bigcap\sigma(\cH(z_2))$ with the
corresponding eigenfunctions $f_1$ and $f_2$.  Without loss of
generality, $z_1 \neq -1$.  First consider the case $f_1(v)\neq 0$.
Then, by Theorem~\ref{thm:interlacing}, $\lambda$ as an eigenvalue of
$\cH(z_1)$ lies strictly between consecutive eigenvalues of $\cH(z_2)$
which contradicts our assumption.  Thus $f_1(v)=0$ and the sum of
derivatives of $f_1$ is zero by (\ref{E:extended_compl_again}).  We
conclude that $f_1$ is an eigenfunction for any $z\in S^1$ thus
completing the proof of the Lemma.

We now define the set $\Delta$ as the set of all $\lambda\in\R$ such that $\lambda\in\sigma(\cH(z))$ for all $z\in S^1$. As the previous lemma shows, one can describe $\Delta$ as the intersection $\sigma(\cH(-1))\bigcap\sigma(\cH(1))$ of the Dirichlet and Neumann spectra and thus is discrete and bounded from below.

Consider now a neighborhood $U$ in $\C$ of the circle $S^1$ and the
set $\bbar{\Sigma}$ that is the closure of the set
\begin{equation}
  \label{eq:Sigma_def}
  \Sigma := \left(\bigcup\limits_{z\in
      U}\Big(\{z\}\times\sigma(\cH(z))\Big)\right) 
  \setminus \left(\C\times \Delta\right).
\end{equation}
To put things simply, we remove from the union of the spectra all
``horizontal'' branches $(z,\lambda)$ with $\lambda\in \Delta$.  The
second statement of Lemma \ref{L:monot} then implies that the set
$\bbar{\Sigma}$ is analytic. Moreover, for the neighborhood $U$ being
sufficiently small, it is a smooth complex analytic curve. Indeed,
Lemma \ref{L:multipl} implies that for all $\lambda\notin \Delta$ the
eigenvalues are simple, and thus the eigenvalue branch is analytic.
Let now $\lambda\in\Delta$ and $(z_0,\lambda)\in \bbar{\Sigma}$ for
some $z_0\in S^1$.  Then Lemma \ref{L:multipl} says that the
eigenvalue $\lambda$ near $z_0$ either stays constant, or splits into
the constant and possibly one or more increasing branches. The
constant branches are excluded in the definition of $\bbar{\Sigma}$,
equation (\ref{eq:Sigma_def}).  Also, there can be no more than one
increasing branch, otherwise one would find two values of $z$ with the
same values of $\lambda$ in $\bbar{\Sigma}$, which the lemma allows
only to happen on horizontal branches. Moreover, as we have already
concluded, the eigenvalue on such a branch must be simple. Then
Rellich's theorem (e.g., \cite[v.4, Theorem XII.3]{ReedSimon_v14})
says that the increasing branch is analytic. We thus conclude that
$\bbar{\Sigma}$ consists of one or more non-intersecting analytic
curves.

We will show now that there is only one component, if the neighborhood
$U$ is sufficiently small. First of all, the projection onto the
$\lambda$-axis of $S_j\bigcap (S^1\times\R)$, where $S_j$ is any of
the components of $\bbar{\Sigma}$ is the whole real axis, otherwise
the component whose projection does not cover the whole real axis
would create an accumulation that would contradict the discreteness of
the spectrum for each $z\in S^1$. Then existence of two or more
components would create equal eigenvalues for at least two distinct
values of $z$, which would contradict to the Lemma \ref{L:multipl}.

Thus, $\bbar{\Sigma}$ is an irreducible analytic curve that intersects each line $\{z\}\times \R$ on the cylinder along the discrete spectrum $\sigma(\cH(z))$ and such that it is monotonically increasing counterclockwise. Hence, it forms a kind of a spiral winding around the cylinder infinitely many times when $\lambda\to\infty$ and having a vertical asymptote when $\lambda\to-\infty$ (otherwise one would get a contradiction with the boundedness of each of the spectra from below). This proves the statement of the theorem.
\end{proof}

\subsection{Nodal count on trees}
\label{sec:nodal_count}

Here we present a simple proof of a result that was first discovered in
\cite{AlO_viniti92,PokPryObe_mz96,Sch_wrcm06}.  With all
ground-setting work done in Section~\ref{sec:eig_interlacing} the
proof is significantly shortened.

\begin{theorem}
  \label{thm:nodal_count_trees}
  Let $\lambda_n$ be a simple eigenvalue of $-\frac{d^2}{dx^2} +
  V(x)$ on a tree graph $\Graph$ and its eigenfunction $f^{(n)}$ be
  non-zero at all vertices of $\Graph$. Then $f^{(n)}$ has $n-1$ zeros
  on $\Graph$.
\end{theorem}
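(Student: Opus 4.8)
The plan is to cut the tree at each zero of the eigenfunction, to recognise the restriction to each resulting piece as a ground state, and then to run the interlacing Theorem~\ref{thm:interlacing} in reverse in order to recover the index $n$ from the number of pieces. Write $f=f^{(n)}$ and $\lambda=\lambda_n$. On each edge $f$ is a nontrivial solution of $-f''+Vf=\lambda f$ (it cannot vanish identically on an edge, for then it would vanish at that edge's endpoints, contradicting the hypothesis), so its zeros are isolated; hence there are finitely many, say $\nu$, and by hypothesis all of them lie in edge interiors. The goal is to prove $\nu=n-1$.

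First I would form the cut graph $\widehat\Graph$ as follows: insert a degree-$2$ Neumann--Kirchhoff vertex at each of the $\nu$ zeros (this does not change the operator) and then switch the condition at each of these new vertices to the extended $\delta$-type Dirichlet condition ($\alpha=\infty$). A Dirichlet condition decouples a degree-$2$ vertex, and a tree cut at $\nu$ interior points falls into $\nu+1$ connected components, so $\widehat\Graph$ is a disjoint union of $\nu+1$ quantum graphs. On each component $C$ the restriction $f|_C$ is a nontrivial eigenfunction with eigenvalue $\lambda$; it is continuous, non-zero at every vertex of $C$ other than the newly created Dirichlet ones, and has no interior zero (we cut at all of them). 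A continuous function on a connected metric graph vanishing only at some of its degree-$1$ vertices has constant sign, so $f|_C$ is a non-sign-changing eigenfunction, hence the (simple) ground state of $C$. Consequently $\lambda$ is exactly the bottom eigenvalue of each of the $\nu+1$ components and appears nowhere else in $\sigma(\widehat\Graph)$; writing $N_G(\mu)$ for the number of eigenvalues of $G$ in $(-\infty,\mu]$ counted with multiplicity, this means $N_{\widehat\Graph}(\mu)=0$ for $\mu<\lambda$ and $N_{\widehat\Graph}(\lambda)=\nu+1$.

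Next I would undo the cuts one at a time, lowering the $\delta$-coefficient at an inserted vertex from $\infty$ back to $0$. For one such step, Theorem~\ref{thm:interlacing} (used with $\alpha=0<\alpha'=\infty$) gives $\lambda_m(G_0)\le\lambda_m(G_\infty)\le\lambda_{m+1}(G_0)$ for every $m$, which is exactly the statement $N_{G_\infty}(\mu)\le N_{G_0}(\mu)\le N_{G_\infty}(\mu)+1$ for all $\mu\in\R$. Iterating over the $\nu$ steps from $\widehat\Graph$ down to $\Graph$ yields
\begin{equation*}
  N_{\widehat\Graph}(\mu)\ \le\ N_{\Graph}(\mu)\ \le\ N_{\widehat\Graph}(\mu)+\nu\qquad\text{for all }\mu\in\R .
\end{equation*}
Since $\lambda=\lambda_n(\Graph)$ is simple, $N_{\Graph}(\mu)=n-1$ for $\mu$ just below $\lambda$ and $N_{\Graph}(\lambda)=n$. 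Taking $\mu\uparrow\lambda$ in the right-hand inequality gives $n-1\le 0+\nu$, and taking $\mu=\lambda$ in the left-hand inequality gives $\nu+1\le n$. Hence $\nu=n-1$.

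The step I expect to be the only genuine obstacle is the claim that a constant-sign eigenfunction on a connected component is automatically its simple ground state: this is a Perron--Frobenius-type fact, proved by the standard argument that the quadratic form \eqref{E:quadr} does not increase under $f\mapsto|f|$ (so $|f|$ is also a minimiser of the Rayleigh quotient), together with strict positivity of the ground state in edge interiors and simplicity of the ground state of a connected quantum graph with ($\delta$-type or extended $\delta$-type) vertex conditions. Everything else is bookkeeping with the already-established interlacing inequality of Theorem~\ref{thm:interlacing}.
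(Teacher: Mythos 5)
Your argument is correct, but it takes a genuinely different route from the paper's. The paper proceeds by induction on the number of internal vertices: it cuts the tree at a single vertex $v$ where $f\neq 0$, equips each of the $d_v$ subtrees with the Robin condition $\alpha_i=f_i'(v)/f_i(v)$ so that $f$ remains an eigenfunction, applies the inductive hypothesis (with Sturm's oscillation theorem on an interval as the base case), and then uses Theorem~\ref{thm:interlacing} to convert the indices $n_i$ of $\lambda$ on the subtrees into the count $n-1=\sum_i(n_i-1)$ of eigenvalues of the Dirichlet-decoupled tree below $\lambda$. You instead cut at all $\nu$ zeros of $f$ (necessarily interior, and none can be a zero of $f'$ as well), recognize $f$ on each of the $\nu+1$ components as a sign-definite eigenfunction and hence the simple ground state, and recover $n$ by undoing the $\nu$ Dirichlet cuts one at a time via the counting-function form of Theorem~\ref{thm:interlacing}; your bookkeeping $N_{\widehat\Graph}(\mu)\le N_{\Graph}(\mu)\le N_{\widehat\Graph}(\mu)+\nu$ and the resulting pinching $n-1\le\nu$ and $\nu+1\le n$ are all correct. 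What your route buys is the absence of induction and of the classical Sturm oscillation theorem; what it costs is exactly the Perron--Frobenius ingredient you flag: that on a connected quantum graph with $\delta$-type conditions (and Dirichlet conditions at the cut points) a nonnegative eigenfunction must be the simple ground state. That fact is standard and your sketch is the right one ($h[|f|,|f|]\le h[f,f]$ for the form \eqref{E:quadr}; strict positivity of the ground state by propagating ODE uniqueness through vertices using continuity and $\sum_{e}\frac{df}{dx_e}(v)=\alpha_v f(v)$; then orthogonality to rule out $\lambda>\lambda_1(C)$), but it is not proved in the paper and its complete proof is comparable in length to the paper's entire argument. Note also that this lemma genuinely uses the $\delta$-type structure of the vertex terms --- for a general self-adjoint $\Lambda_v$ the $|f|$-trick fails --- so your proof, like the paper's, implicitly assumes $\delta$-type (continuity-enforcing) conditions at the vertices of $\Graph$.
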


\begin{proof}
  To simplify notation we will drop the script $n$ when talking about
  the eigen-pair $(\lambda_n, f^{(n)})$.  We will prove the result by
  induction on the number of internal vertices of the tree $\Graph$.
  If there are no internal vertices, $\Graph$ is simply an interval
  and the statement reduces to the classical Sturm's Oscillation
  Theorem (see, e.g. \cite{CourantHilbert_volume1}).

  Let $v$ be a vertex of degree $d_v$.  Then $v$ separates $\Graph$
  into $d_v$ sub-trees $\Graph_i$, each with a strictly smaller number
  of internal vertices.  For each subtree, the vertex $v$ is a vertex
  of degree 1 with, so far, no vertex condition.  We will impose
  $\delta$-type condition with the parameter $\alpha_i < \infty$
  chosen in such a way that the function $f$ restricted to $\Graph_i$
  is still an eigenfunction.  This value is simply $\alpha_i = f_i'(v)
  / f_i(v)$.  The eigenfunction $f$ still corresponds to the
  eigenvalue $\lambda_n$ which is still simple (otherwise we can
  construct another eigenfunction for the entire tree $\Graph$,
  contradicting the simplicity of $\lambda$).  It is an $n_i$-th
  eigenvalue of $\Graph_i$ and, applying the inductive hypothesis, we
  conclude that the function $f$ has $n_i-1$ zeros on the subtree
  $\Graph_i$.  Thus we need to understand the relationship between the
  numbers $n_i$ and the number $n$.

  To do it, consider the subtree $\Graph_{i,\infty}$ which now has
  Dirichlet condition on the vertex $v$.  By
  Theorem~\ref{thm:interlacing}, there are exactly $n_i-1$ eigenvalues
  of $\Graph_{i,\infty}$ that are smaller than $\lambda$.  Now
  consider the tree $\Graph_{\infty}$, which is the original tree but
  with the Dirichlet condition at the vertex $v$.  On one hand it has
  exactly $n-1$ eigenvalues that are smaller than $\lambda$.  On the
  other hand $\Graph_{\infty}$ is just a disjoint collection of
  subtrees $\Graph_{i,\infty}$ and its spectrum is the superposition
  of the spectra of $\Graph_{i,\infty}$.  Therefore the number of
  eigenvalues that are smaller than $\lambda$ is
  \begin{displaymath}
    n-1 = \sum_{i=1}^{d_v} (n_i-1).
  \end{displaymath}
  Since we already proved that the number of zeros of $f$ is equal to
  the sum on the right-hand side, we can conclude that $\mu(f)=n-1$.
\end{proof}

\section*{Acknowledgments}
The work of the first author was supported in part by the NSF grant
DMS-0907968.  The work of the second author was supported in part by
the Award No. KUS-C1-016-04, made to IAMCS by King Abdullah University
of Science and Technology (KAUST).  GB is grateful to C.R.~Rao
Advanced Institute of Mathematics, Statistics and Computer Science
(AIMSCS), Hyderabad, India, where part of the work was conducted, for
warm hospitality.  Stimulating discussions with R.~Band, U.~Smilansky
and G.~Tanner are gratefully acknowledged.

\appendix
\section{Analytic Fredholm alternative}
\label{A:Fredholm}

Here we formulate the following version of the analytic Fredholm alternative:
\begin{theorem}
Let $A(z)$ be an analytic family of Fredholm operators of index zero acting in a Banach space $E$, where $z$ runs over a holomorphically convex domain $\Omega$ in $\C^n$ (or a more general Stein analytic manifold). Then there exists an analytic function $f(z)$ in $\Omega$ such that the set of all points $z$ for which $A(z)$ is not invertible coincides with the set of all zeros of the function $f$.
\end{theorem}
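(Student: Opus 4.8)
The plan is to reduce the global statement on $\Omega$ to a local one using a standard argument for analytic Fredholm families, and then to patch the local ``determinant-like'' functions together using the hypothesis that $\Omega$ is holomorphically convex (Stein). First I would fix a point $z_0\in\Omega$ and use the Fredholm property to write, near $z_0$, the operator $A(z)$ in block form with respect to the decomposition $E=\ker A(z_0)\oplus M$ and $E=N\oplus \operatorname{ran} A(z_0)$, where $M$ is a closed complement of the (finite-dimensional) kernel and $N$ is a complement of the range. Since the index is zero, $\dim\ker A(z_0)=\dim N=:m$. In this decomposition $A(z)=\begin{pmatrix} a(z)& b(z)\\ c(z)& d(z)\end{pmatrix}$ with $d(z_0)$ invertible, hence $d(z)$ invertible for $z$ near $z_0$; the Schur complement $s(z):=a(z)-b(z)d(z)^{-1}c(z)$ is then an $m\times m$ matrix of holomorphic functions, and $A(z)$ is invertible precisely when $\det s(z)\neq 0$. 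Thus locally the non-invertibility set is the zero set of the scalar holomorphic function $f_{z_0}(z):=\det s(z)$.

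The next step is to understand how two such local functions $f_{z_0}$ and $f_{z_1}$ are related on the overlap of their domains $V_0\cap V_1$. On the overlap both are holomorphic and vanish exactly on the same analytic set (the non-invertibility locus), but more is true: a short computation with the block decompositions shows that on $V_0\cap V_1$ one has $f_{z_0}=u_{01}\,f_{z_1}$ for a nowhere-vanishing holomorphic function $u_{01}$ (the transition cocycle is, up to invertible factors coming from the changes of the complements $M$, $N$, the ratio of the two finite-dimensional determinants). Here I would invoke either the fact that the $f_{z_0}$ generate the same ideal sheaf locally together with the coherence / local principality, or simply track the determinant multiplicativity directly; the cleanest route is to note that for $z$ ranging over $V_0\cap V_1$ the two Schur complements differ by multiplication by invertible holomorphic matrix-valued functions, whose determinants are the desired units $u_{01}$. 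The functions $\{u_{01}\}$ form a holomorphic $1$-cocycle on the cover $\{V_{z_0}\}$ of $\Omega$, i.e. a class in $H^1(\Omega,\mathcal{O}^*)$.

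Now the hypothesis enters: since $\Omega$ is holomorphically convex (Stein), Cartan's Theorem B gives $H^1(\Omega,\mathcal{O})=0$, and combined with the exponential sheaf sequence and $H^2(\Omega,\mathbb{Z})$ considerations one gets that the cocycle $\{u_{01}\}$, being the boundary of the $0$-cochain of local $f$'s in the sense that it already splits multiplicatively, can be trivialized: there exist nowhere-vanishing holomorphic functions $u_j$ on $V_{z_j}$ with $u_{01}=u_0/u_1$. (Even more simply: the cocycle $\{u_{01}\}$ is by construction a coboundary of the cochain $\{f_{z_j}\}$ in $H^1$ with values in the sheaf of meromorphic functions, and on a Stein manifold additive Cousin I / Theorem B lets one correct it to holomorphic units.) Setting $f(z):=u_j(z)^{-1}f_{z_j}(z)$ on $V_{z_j}$ then gives a globally well-defined holomorphic function on $\Omega$ whose zero set is exactly the non-invertibility locus of $A(z)$, and $f\not\equiv 0$ because $A(z)$ is invertible at some point (which holds since, e.g., $A(z)$ is not everywhere non-invertible — this uses that the family is genuinely Fredholm of index zero and, in the applications, that self-adjointness provides invertible members; in the abstract statement one should add this non-degeneracy hypothesis or note that otherwise $f\equiv 0$ works).

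The main obstacle is the cohomological patching: making precise that the transition functions $u_{01}$ are holomorphic and nowhere vanishing, and that the resulting class in $H^1(\Omega,\mathcal{O}^*)$ is trivial as an obstruction to globalizing a principal divisor. This is exactly where holomorphic convexity of $\Omega$ is indispensable, and it is the content one imports from Oka--Cartan theory (Theorem B on Stein manifolds), as surveyed in \cite{ZaiKreKuc_umn75}; the rest of the argument is the routine block-operator bookkeeping sketched above.
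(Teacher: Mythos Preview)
The paper does not actually prove this theorem; the appendix merely states it and cites the Corollary to Theorem~4.11 in \cite{ZaiKreKuc_umn75}. So there is no in-paper argument to compare against, and your outline is essentially an attempt to reconstruct what that reference does. Your local step is correct and standard: the Schur-complement reduction near each $z_0$ produces a holomorphic $m\times m$ matrix whose determinant cuts out the non-invertibility locus, and two such local determinants differ on overlaps by a holomorphic unit, giving a cocycle $\{u_{ij}\}$ in $H^1(\Omega,\mathcal{O}^*)$.

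The gap is in your trivialization of this cocycle. Cartan's Theorem~B gives $H^q(\Omega,\mathcal{O})=0$ for $q\ge1$, and the exponential sequence then yields $H^1(\Omega,\mathcal{O}^*)\cong H^2(\Omega,\mathbb{Z})$ on a Stein manifold; this group is \emph{not} zero in general (already for domains of holomorphy in $\C^n$, $n\ge2$). Thus Theorem~B alone does not kill the obstruction. Your parenthetical fallback to ``additive Cousin~I'' is misplaced: the problem here is multiplicative (Cousin~II), and Cousin~II on Stein spaces carries exactly this topological obstruction. Saying that $u_{ij}=f_i/f_j$ is a ``meromorphic coboundary'' is just a restatement of the Cousin~II data, not a solution of it.

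What is actually needed is an argument specific to Fredholm families showing that this particular line bundle is trivial. One clean route, which is the kind of thing the survey \cite{ZaiKreKuc_umn75} supplies, is to produce a \emph{global} holomorphic finite-rank operator $K(z)$ on $\Omega$ such that $A(z)+K(z)$ is everywhere invertible (this existence step is where the Stein hypothesis, together with infinite-dimensional bundle triviality in the spirit of Kuiper--Bungart used elsewhere in the paper, genuinely enters). Then
\[
f(z)\;=\;\det\!\Bigl(I-(A(z)+K(z))^{-1}K(z)\Bigr)
\]
is a globally defined holomorphic Fredholm determinant whose zero set is exactly the non-invertibility locus. Your sketch correctly locates where the difficulty lies but does not supply this missing ingredient.
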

The proof of this statement can be found in many places, e.g. it follows from the Corollary to Theorem 4.11 in \cite{ZaiKreKuc_umn75}.

\bibliographystyle{abbrv}
\bibliography{bk_bibl,dependence}

\end{document}